\documentclass[journal]{IEEEtran}

\usepackage{amsmath}
\usepackage{amsfonts}
\usepackage{amssymb}
\usepackage{amsthm}
\usepackage{tabularx}
\usepackage{mathrsfs} 
\usepackage[caption=false,font=footnotesize]{subfig}

\usepackage{url}
\usepackage{hyperref}
\newtheorem{theorem}{Theorem}
\newtheorem{remark}{Remark}
\newtheorem{definition}{Definition}
\newtheorem{Proposition}{Proposition}
\newtheorem{lemma}{Lemma}
\newtheorem{assumption}{Assumption}

\usepackage{stfloats}
\usepackage{float}
\usepackage{graphicx}
\usepackage{cite}
\usepackage{xcolor}

\makeatletter
\def\blfootnote{\xdef\@thefnmark{}\@footnotetext}
\makeatother

\usepackage{amsmath,amssymb,amsfonts}
\usepackage{algorithm}
\usepackage{algorithmic}
\usepackage{graphicx}
\usepackage{cite}
\usepackage{mathtools}
\begin{document}

\title{
	Copula-Induced Correntropy for \\ Robust Conjugate Gradient Learning
}

\author{
	Anonymous Author(s)%
	\thanks{This work will be anonymized for peer review.}
}

\author{Farshad~Rostami~Ghadi,~\IEEEmembership{Member},~\textit{IEEE}, 	F.~Javier~L\'opez-Mart\'inez,~\IEEEmembership{Senior~Member},~\textit{IEEE}, David~Morales-Jim\'enez,~\IEEEmembership{Senior~Member},~\textit{IEEE},~Kai-Kit~Wong,~\IEEEmembership{Fellow},~\textit{IEEE},~and~Marios~Kountouris,~\IEEEmembership{Fellow},~\textit{IEEE}
}
\maketitle
 \blfootnote{The work of F. J. L\'opez-Mart\'inez and D. Morales-Jim\'enez is supported by grant PID2023-149975OB-I00 (COSTUME) funded by MICIU/AEI/10.13039/501100011033, and by ERDF/EU.}
 \blfootnote{The work of F. Rostami Ghadi is supported by the European Union's Horizon Europe Research and Innovation Programme under Marie Sk\l odowska-Curie Grant No. 101107993.}
\blfootnote{The work of K. K. Wong is supported by the Engineering and Physical Sciences Research Council (EPSRC) under Grant EP/W026813/1.}
\blfootnote{The work of M. Kountouris is supported by the European Research Council (ERC) under the European Union’s Horizon 2020 Research and Innovation Programme (Grant agreement No. 101003431).}

\blfootnote{\noindent F. Rostami Ghadi, F. J. L\'opez-Mart\'inez and D. Morales-Jim\'enez are with the Department of Signal Theory, Networking and Communications, Research Centre for Information and Communication Technologies (CITIC-UGR), University of Granada, 18071, Granada, Spain (e-mail: $\rm \{f.rostami, fjlm,dmorales\}@ugr.es$).}
\blfootnote{\noindent 
K. K. Wong is affiliated with the Department of Electronic and Electrical Engineering, University College London, Torrington Place, WC1E 7JE, United Kingdom and he is also affiliated with the Department of Electronic Engineering, Kyung Hee University, Yongin-si, Gyeonggi-do 17104, Korea (e-mail: $\rm kai\text{-}kit.wong@ucl.ac.uk$).}
\blfootnote{\noindent
M. Kountouris is with the Department of Computer Science and Artificial Intelligence, Andalusian Research Institute in Data Science and Computational Intelligence (DaSCI), University
of Granada, Spain (e-mail: $\rm mariosk@ugr.es\rm$).}

\blfootnote{Corresponding author: Farshad Rostami Ghadi.}

\begin{abstract}
Robust learning in the presence of non-Gaussian and statistically dependent noise remains a fundamental challenge in signal processing and adaptive systems.
Although information-theoretic learning criteria such as correntropy offer strong robustness against impulsive and heavy-tailed disturbances, existing formulations are commonly applied componentwise and therefore do not explicitly exploit the dependence structures inherent in multivariate, multi-sensor, and temporal signals.
In this paper, we propose a learning framework, termed \textit{copula-induced information-theoretic learning} (CITL), which extends correntropy by embedding a copula space representation of residual dependence into the similarity measure. Unlike conventional correntropy-based approaches that operate pointwise on raw residuals, the proposed criterion is defined in a copula-transformed residual space, thus separating marginal robustness from dependence weighting. We derive a copula-induced correntropy (CIC) objective and a mixed marginal--dependence objective used in the implementation, provide information-theoretic and Bayesian interpretations,	and develop a robust conjugate gradient (CG) learning algorithm tailored to this criterion. For fixed smooth marginal estimators, a fixed copula-space metric, and a regularized radial penalty, we establish sufficient descent and global stationarity guarantees for the corresponding fixed-estimator subproblem under standard line-search conditions. Experiments on synthetic multivariate signal processing regression problems demonstrate that the proposed method consistently outperforms mean squared error (MSE), Huber, Student's-$t$, and classical correntropy-based approaches, 	particularly in the presence of dependent heavy-tailed noise.
\end{abstract}

\begin{IEEEkeywords}
	Correntropy, copula theory, information-theoretic learning,
	conjugate gradient methods, robust signal processing,
	dependent noise, heavy-tailed distributions.
\end{IEEEkeywords}
\section{Introduction}

Robust learning under noise and uncertainty is a longstanding problem in signal processing (SP), adaptive filtering, and nonlinear system identification \cite{maronna2019robust}.
In classical supervised learning, model parameters are commonly estimated by minimizing the mean squared error (MSE) criterion \cite{hay2002adaptive,sun2017major,huber2011robust}.
Although MSE-based learning is optimal under the assumptions of independent and Gaussian noise,
these assumptions are rarely satisfied in practical SP applications~\cite{jav2025time}.
In many real-world scenarios, including multi-sensor fusion, vector time-series prediction, biomedical signal analysis, and financial modeling, noise sources (either due to measurement procedures or modeling errors) are often heavy-tailed, impulsive, and statistically dependent \cite{clavier2021imp}.
Under such conditions, MSE-based methods exhibit severe performance degradation due to their unbounded influence functions and excessive sensitivity to outliers.

To mitigate the impact of non-Gaussian disturbances, numerous robust alternatives to MSE have been proposed, including $L_p$-norm minimization \cite{nasri2009ad}, Huber-type losses \cite{zhu2018huber}, and Student's-$t$ likelihood-based formulations \cite{tang2024gen,huang2019novel}. While these methods improve robustness with respect to marginal outliers, they fundamentally rely on pointwise error modeling and largely neglect the dependence structure among error components \cite{gr2022an}. This limitation becomes particularly critical in multivariate and temporal SP problems,
where residuals exhibit strong cross-component, spatial, or temporal dependencies. Ignoring such dependence may lead to biased estimation, slow convergence, and poor generalization performance.

In recent years, information-theoretic learning (ITL) has emerged as a powerful paradigm for robust adaptive systems by leveraging similarity measures derived from information theory \cite{prin2010info}. Among ITL criteria, correntropy has attracted considerable attention due to its boundedness, locality, and inherent robustness against impulsive noise \cite{liu2007corr}.
Formally, correntropy measures the similarity between random variables through a kernel-induced expectation, thereby incorporating higher-order statistics beyond second-order moments. Despite these favorable properties, existing correntropy-based learning methods are predominantly formulated for scalar or componentwise residuals and implicitly assume independence among error components \cite{heravi2018new,chen2017rob}. Even more advanced variants such as generalized correntropy \cite{chen2016gen} improve robustness against heavy-tailed noise but still operate on marginal or componentwise residuals, without explicitly modeling statistical dependence among error components. As a consequence, classical correntropy fails to exploit the statistical dependence that naturally arises in multivariate, multi-sensor, and time-series signals.

One potential solution to this limitation is to employ copula theory to explicitly model statistical dependence structures. Generally speaking, copula theory offers a principled mathematical framework for modeling statistical dependence independently of marginal distributions \cite{nelson2006an,zeng2014copula}. By virtue of Sklar's theorem, any multivariate distribution can be decomposed into its marginal distributions and a copula function that fully characterizes the underlying dependence structure. This separation is particularly appealing for robust learning, as it enables heavy-tailed marginals and complex dependence patterns to be modeled in a decoupled manner.
Although copulas have been recently studied in wireless communications \cite{peter2014comm,ghadi2021copula,zheng2019copula,jor2021copula}, their integration into ITL and robust adaptive SP remains largely unexplored.

Motivated by these observations, we introduce a new learning paradigm that embeds copula-based dependence modeling directly into an information-theoretic similarity measure. Instead of applying correntropy pointwise to raw residuals, we define a novel criterion in a copula-transformed residual space, thereby capturing dependence-aware similarity while preserving robustness to heavy-tailed marginal disturbances. The resulting framework, termed
\emph{copula-induced information-theoretic learning} (CITL), provides a dependence extension of correntropy in the copula residual domain and yields a principled objective for robust learning under dependent noise.

The main contributions of this paper are summarized as follows:
\begin{itemize}
	\item We propose a copula-induced correntropy (CIC) criterion that  incorporates residual dependence through a copula-space metric while retaining robustness to heavy-tailed marginal noise.
	\item We introduce a mixed marginal-dependence objective that contains marginal correntropy and CIC as limiting cases and is used consistently in the algorithmic and numerical sections.
    \item We provide information-theoretic and Bayesian  interpretations of the proposed criterion; the latter is stated as a pseudo-likelihood/maximum a posteriori (MAP) analogy unless an explicitly normalized copula likelihood is specified.
	\item We develop a robust conjugate gradient (CG) learning algorithm tailored to the proposed objective, enabling efficient optimization for nonlinear and multivariate models.
	\item We provide a convergence analysis for the regularized fixed-estimator subproblem, including sufficient descent and global stationarity under strong Wolfe line search and an explicit bounded-direction safeguard.
	\item We demonstrate the effectiveness of the proposed framework through numerical experiments on synthetic multivariate signal processing regression problems.
\end{itemize}

The remainder of this paper is organized as follows. Section~\ref{sec:prelim} reviews background on correntropy, copula theory, and nonlinear CG methods. Section~\ref{sec:citl} introduces the proposed CIC criterion. Section~\ref{sec:opt} presents the corresponding CG learning algorithm. Section~\ref{sec:conv} provides convergence analysis. Section~\ref{sec:experiments} reports numerical results. Finally, Section~\ref{sec:conclusion} concludes the paper.

\subsection{Mathematical Notation and Conventions}
\label{subsec:notation}

Throughout this paper, scalars are denoted by lowercase letters $x$, vectors by bold lowercase letters $\mathbf x$, and matrices by bold uppercase letters $\mathbf X$. The $i$-th element of a vector $\mathbf x$ is denoted by $x_i$, and the $(i,j)$-th entry of a matrix $\mathbf X$ by $X_{ij}$. The superscript $(\cdot)^\top$ denotes transpose. For a vector $\mathbf x\in\mathbb{R}^p$ and a symmetric positive definite matrix $\mathbf A\in\mathbb{R}^{p\times p}$, the weighted norm is defined as
$
\|\mathbf x\|_{\mathbf A}^2 \triangleq \mathbf x^\top \mathbf A \mathbf x.
$
The Euclidean norm is denoted by $\|\cdot\|_2$.
The identity matrix of appropriate dimension is denoted by $\mathbf I$.

Random variables are denoted by uppercase letters $X$.
Expectation with respect to a random variable is denoted by $\mathbb E[\cdot]$. For a random vector $\mathbf E=(E_1,\ldots,E_p)$,
the marginal cumulative distribution function (CDF) and probability density function (PDF) of $E_i$ are denoted by $F_i(\cdot)$ and $f_i(\cdot)$, respectively. Given a set of input-output data samples $\{(\mathbf x_n,\mathbf d_n)\}_{n=1}^N$,
the output of a nonlinear model parameterized by $\mathbf w$ is denoted by
$\mathbf y_n = f(\mathbf x_n;\mathbf w)$,
and the corresponding residual vector is
$
\mathbf e_n = \mathbf d_n - \mathbf y_n
$. 
The copula-transformed residual vector is defined as
$
\mathbf u_n =
\big(F_1(e_{n,1}),\ldots,F_p(e_{n,p})\big)\in(0,1)^p
$. For a differentiable objective function $J(\mathbf w)$,
its gradient with respect to $\mathbf w$ is denoted by $\nabla J(\mathbf w)$. Unless otherwise stated, all gradients are assumed to be column vectors. The notation $\mathcal O(\cdot)$ denotes computational complexity up to constant factors. Finally, $\lambda_{\min}(\cdot)$ and $\lambda_{\max}(\cdot)$
denote the minimum and maximum eigenvalues of a symmetric matrix, respectively.

Table~\ref{tab:notation} summarizes the main symbols and parameters used throughout the paper.

\begin{table}[t]
	\caption{Summary of Notation}
	\label{tab:notation}
	\centering
	\begin{tabular}{ll}
		\hline
		\textbf{Symbol} & \textbf{Description} \\
		\hline
		$\mathbf{x}_n $ & Input vector at sample $n$ \\
		$\mathbf{d}_n$ & Desired (target) output vector \\
		$\mathbf{y}_n$ & Model output \\
		$\mathbf{w} $ & Model parameter vector \\
		$\mathbf{e}_n$ & Residual (error) vector \\
		$N$ & Number of training samples \\[2pt]

		$d$ & Input dimension \\
		$p$ & Output dimension \\
		$q$ & Number of model parameters \\[2pt]

		$F_i(\cdot)$ & Marginal CDF of the $i$-th residual component \\
		$f_i(\cdot)$ & Marginal PDF of the $i$-th residual component \\
		$\widehat F_i(\cdot)$ & Estimated marginal CDF \\
		$\widehat f_i(\cdot)$ & Estimated marginal PDF \\
		$\mathbf{u}_n$ & Copula-transformed residual vector \\
		$\mathbf{u}_0$ & Dependence center in copula space \\
		$\mathbf{s}_n$ & Centered copula residual \\
		$\rho_n$ & Squared Mahalanobis distance in copula space \\[2pt]

		$\Sigma$ & Copula-space dependence (covariance) matrix \\
		$\widehat{\Sigma}$ & Shrinkage estimate of $\Sigma$ \\
		$\lambda $ & Shrinkage parameter \\[2pt]

		$\alpha $ & Shape parameter controlling tail robustness \\
		  $\delta$ &   Smoothing constant in the copula space radial penalty\\
		$\sigma_{\rm k}$ & Marginal correntropy kernel size\\
		$\sigma_{\varepsilon}$ & Noise scale in the Student's-$t$ disturbance model\\
		$\omega_n^{(\delta)}$ & Regularized sample weight in CIC gradient \\[2pt]

		  $V_{\alpha,\Sigma,\delta}^{\mathrm{CIC}}$ & Copula-induced correntropy (CIC) \\
		  
          $J_\gamma(\mathbf{w})$ &   Mixed CIC objective function \\[2pt]

		  $\nabla J_\gamma(\mathbf{w})$ & Gradient of the  mixed CIC objective \\
		$\mathbf{J}_n$ & Jacobian of the model output \\[2pt]

		$\mathbf{w}_k$ & CG iterate at iteration $k$ \\
		$\mathbf{g}_k$ & Gradient at iteration $k$ \\
		  $\mathbf{p}_k$  & CG search direction \\
		$\alpha_k$ & Step size (line search) \\
		$\beta_k$ & CG update coefficient (PRP$^+$) \\
		$\eta$ & Restart threshold parameter \\[2pt]

		$c_1,c_2$ & Strong Wolfe line-search constants \\
		$R$ & Update period for marginal and dependence estimates \\
		$\epsilon$ & Copula space clipping constant \\[2pt]

		$\gamma$ & Mixing parameter between marginal and dependence terms\\
		\hline
	\end{tabular}
\end{table}

\section{Preliminaries and Background}
\label{sec:prelim}

This section reviews the theoretical foundations underpinning the proposed framework.
Specifically, we summarize correntropy-based ITL,
copula theory for dependence modeling, and nonlinear CG methods for optimization. The presentation emphasizes properties and limitations most relevant to the developments in subsequent sections.
\subsection{Correntropy and Information-Theoretic Learning}
\label{subsec:correntropy}

Let $X$ and $Y$ be random variables defined on a common probability space. Correntropy is an information-theoretic similarity measure defined as \cite{liu2007corr}
\begin{align}
	V_\sigma(X,Y)
	=
	\mathbb{E}\!\left[k_\sigma(X-Y)\right],
	\label{eq:correntropy_def}
\end{align}
where $k_\sigma(\cdot)$ is a shift-invariant positive definite kernel
with kernel size $\sigma>0$.


Unlike the MSE, which depends solely on second-order statistics, correntropy implicitly incorporates higher-order moments of the error distribution through the kernel-induced expectation. Moreover, due to the boundedness of $k_\sigma(\cdot)$, correntropy yields a similarity measure with a bounded influence function, thereby conferring inherent robustness against impulsive and heavy-tailed noise \cite{chen2016gen}.

Given $N$ paired observations $\{(x_n,y_n)\}_{n=1}^N$,
the empirical correntropy is estimated as \cite{liu2007corr}
\begin{equation}
	\widehat V_\sigma
	=
	\frac{1}{N}
	\sum_{n=1}^N
	k_\sigma(x_n-y_n).
	\label{eq:correntropy_empirical}
\end{equation}

Learning under the maximum correntropy criterion (MCC) \cite{chen2012max}
amounts to maximizing \eqref{eq:correntropy_empirical}
with respect to the model parameters. Despite its robustness advantages, classical correntropy is fundamentally a scalar or componentwise criterion. When extended to multivariate settings, it is typically applied independently to each error component, thereby implicitly assuming statistical independence among residuals. As a consequence, existing correntropy-based learning methods fail to exploit dependence structures that are ubiquitous in multivariate, multi-sensor, and temporal SP problems.

\subsection{Copula Theory for Dependence Modeling}
\label{subsec:copula}

Let $\mathbf E=(E_1,\ldots,E_p)$ be a $p$-dimensional random vector
with marginal CDFs $F_i(e_i)$.
According to Sklar's theorem,
the joint CDF of $\mathbf E$ can be expressed as \cite{nelson2006an}
\begin{equation}
	F_{\mathbf E}(e_1,\ldots,e_p)
	=
	C\!\left(F_1(e_1),\ldots,F_p(e_p)\right),
	\label{eq:sklar}
\end{equation}
where $C:[0,1]^p\rightarrow[0,1]$ is a copula function.

The copula $C(\cdot)$ uniquely characterizes the dependence structure among the components of $\mathbf E$, independently of the marginal distributions. This separation between marginals and dependence is particularly advantageous for robust signal processing, as it enables heavy-tailed or impulsive marginal behavior to be modeled independently of cross-component or temporal dependence. Full copula models can capture nonlinear and tail dependence. The covariance-based implementation adopted in this paper should instead be interpreted as a regularized summary of dependence directions in copula space, rather than as a full copula density model. Despite these advantages, copula theory has seen limited integration
with ITL criteria, particularly in the context of robust adaptive algorithms.

\subsection{Nonlinear Conjugate Gradient Methods}
\label{subsec:cg_background}

Consider the unconstrained optimization problem
\begin{equation}
	\min_{\mathbf w \in \mathbb R^q} J(\mathbf w),
	\label{eq:unconstrained_problem}
\end{equation}
where $J:\mathbb R^q\rightarrow\mathbb R$ is continuously differentiable.

Nonlinear CG methods generate a sequence of iterates
$\{\mathbf w_k\}$ according to \cite{no2006num}
\begin{equation}
	\mathbf w_{k+1} = \mathbf w_k + \alpha_k \mathbf p_k,
	\label{eq:cg_update_bg}
\end{equation}
where $\alpha_k>0$ is a step size obtained via line search,
and $\mathbf p_k$ is the search direction defined recursively as \cite{no2006num}
\begin{equation}
	\mathbf p_k = -\nabla J(\mathbf w_k) + \beta_k \mathbf p_{k-1},
	\qquad
	\mathbf p_0 = -\nabla J(\mathbf w_0).
	\label{eq:cg_direction_bg}
\end{equation}

Different choices of the scalar parameter $\beta_k$ give rise to well-known CG variants. For example, the Fletcher-Reeves update is defined as \cite{chat2010fle}
\begin{align}
\beta_k^\mathrm{FR} = \frac{\|\nabla J(\mathbf {w}_k)\|_2^2}{\|\nabla J(\mathbf {w}_{k-1})\|_2^2},
\end{align}
the Polak-Ribi\`ere-Polyak (PRP) \cite{zhang2006des} update is given by
\begin{align}
\beta_k^\mathrm{PRP} = \frac{\nabla J(\mathbf {w}_k)^\top(\nabla J(\mathbf {w}_k)-\nabla J(\mathbf {w}_{k-1}))}{\|\nabla J(\mathbf {w}_{k-1})\|_2^2},
\end{align}
and the Hager-Zhang update is described in \cite{noc2005new}.
 When combined with appropriate line search strategies, such as the strong Wolfe conditions, these methods guarantee sufficient descent and global convergence for a broad class of nonconvex objective functions. It should be noted that due to their favorable convergence properties and low memory requirements, nonlinear CG methods have been widely adopted in large-scale signal processing and adaptive learning problems. In this work, these methods provide an efficient optimization framework for the proposed CIC objective.

\section{Copula-Induced Information-Theoretic Learning}
\label{sec:citl}

This section introduces the proposed CITL framework. We begin by formalizing the problem of dependence-aware error modeling in SP, then define a novel CIC criterion, and finally establish its fundamental properties and interpretations.

\subsection{Error Dependence Modeling in Signal Processing}
\label{subsec:error_modeling}

Consider a general nonlinear adaptive system or regression model of the form
\begin{align}
	\mathbf y_n = f(\mathbf x_n;\mathbf w), \qquad \mathbf y_n \in \mathbb{R}^p,
\end{align}
where $\mathbf x_n \in \mathbb{R}^d$ denotes the input signal, $\mathbf w \in \mathbb{R}^q$ is the parameter vector, and $\mathbf d_n \in \mathbb{R}^p$ denotes the desired response. Therefore, the corresponding residual vector is defined as
\begin{align}
	\mathbf e_n = \mathbf d_n - \mathbf y_n.
\end{align}

In a wide range of SP applications, including multi-sensor systems, vector time-series modeling, and multi-output nonlinear systems, the components of $\mathbf e_n$ are neither independent nor identically distributed. Correlations induced by shared interference sources, temporal coupling, or physical constraints lead to structured dependence that cannot be adequately captured by pointwise error modeling. Conventional learning criteria such as MSE or classical correntropy implicitly rely on independence assumptions and therefore fail to exploit the information contained in such dependence structures. This observation motivates the development of a learning criterion that explicitly incorporates statistical dependence among residual components.

\subsection{Copula-Transformed Residual Space}
\label{subsec:copula_space}

Let $F_i(\cdot)$ denote the marginal CDF of the $i$-th component of the residual vector. We define the copula-transformed residual vector as
\begin{align}
	\mathbf u_n =
	\big(
	F_1(e_{n,1}), F_2(e_{n,2}), \ldots, F_p(e_{n,p})
	\big) \in (0,1)^p.\label{eq:cop-trans}
\end{align}
The transformation in \eqref{eq:cop-trans} maps each residual component to a uniformly distributed variable on the unit interval, thereby removing all marginal distributional information. Consequently, the joint distribution of $\mathbf u_n$ encodes only the dependence structure among the residual components.

By Sklar's theorem, the joint distribution of $\mathbf e_n$ can be expressed in terms of the copula function $C(\cdot)$ as
\begin{align}
	F_{\mathbf{E}}(\mathbf e_n)
	=
	C(\mathbf u_n),
\end{align}
where $C:[0,1]^p \rightarrow [0,1]$ uniquely characterizes the dependence structure of the marginals. Operating in the copula-transformed space thus enables robust  treatment of heavy-tailed marginal behavior while preserving dependence information. In the present implementation, this dependence information is summarized through a regularized covariance metric in copula space; richer copula density models would be required to fully characterize nonlinear or tail dependence.

\subsection{Copula-Induced Correntropy}
\label{subsec:cic}

We now introduce a new information-theoretic similarity measure defined directly on the copula-transformed residuals. In what follows, we specialize the kernel function to the Gaussian kernel, defined as \cite{prin2010info}
\begin{equation}
k_\sigma(e) = \exp\!\left(-\frac{e^2}{2\sigma^2}\right),
\label{eq:gaussian_kernel}
\end{equation}
due to its smoothness, boundedness, and analytical tractability. This choice leads to an exponential-type similarity measure in the copula-transformed residual space.

\begin{definition}[Copula-Induced Correntropy]
Let $\mathbf u_n \in (0,1)^p$ denote the copula-transformed residual vector   and define \begin{equation}
	\rho_n=(\mathbf u_n-\mathbf u_0)^\top\Sigma^{-1}(\mathbf u_n-\mathbf u_0),
\end{equation} 
where $\mathbf u_0 \in (0,1)^p$ denotes the dependence center, typically chosen as $\mathbf u_0=\tfrac12\mathbf 1$,  and $\Sigma \succ 0$ is a positive definite dependence matrix  in copula space.
The regularized CIC is defined as
  \begin{equation}
	V_{\alpha,\Sigma,\delta}^{\mathrm{CIC}}
	=
	\mathbb{E}
	\left[
	\exp\!\left(
	-
	(\rho_n+\delta)^{\alpha/2}
	\right)
	\right],
	\label{eq:CIC_def}
\end{equation}
where $\alpha \in (0,2]$ controls tail robustness and $\delta\ge0$ is a smoothing constant.
The unregularized form used for the basic definition is recovered by setting $\delta=0$, whereas the convergence analysis for $\alpha<2$ uses $\delta>0$ to avoid a singular derivative at $\rho_n=0$.
\end{definition}

Unlike classical correntropy, which measures similarity directly in the raw error domain,
the proposed CIC measures similarity in the dependence space.
As a result, structured deviations arising from coherent or dependent disturbances
are treated fundamentally differently from isolated outliers.
Moreover, the parameter $\alpha$ provides explicit control over robustness, namely, smaller values of $\alpha$   flatten the radial penalty for large deviations and can improve resilience to heavy-tailed noise. For $\alpha<2$, however, the derivative of the unregularized penalty is singular at $\rho_n=0$; hence, the regularized form with $\delta>0$ is used in the theoretical analysis. In practice, the copula-transformed residuals are obtained by first computing the residuals and then applying estimated marginal CDFs. No parametric copula family is assumed; instead, dependence is captured through covariance estimation in copula space.

\subsection{Empirical Estimation and Learning Objective}
\label{subsec:objective}

Given a finite set of $N$ samples $\{(\mathbf x_n,\mathbf d_n)\}_{n=1}^N$, the empirical  CIC estimate is \begin{align}
	\widehat V_{\alpha,\Sigma,\delta}^{\mathrm{CIC}}(\mathbf w)
	=
	\frac{1}{N}
	\sum_{n=1}^N
	\exp\!\left[-(\rho_n(\mathbf w)+\delta)^{\alpha/2}\right],
	\label{eq:CIC_empirical}
\end{align} 
where $\rho_n(\mathbf w)=(\mathbf u_n(\mathbf w)-\mathbf u_0)^\top\Sigma^{-1}(\mathbf u_n(\mathbf w)-\mathbf u_0)$.
To match the implementation used in the numerical section and to retain explicit marginal robustness, we optimize the mixed marginal--dependence objective \begin{align}\nonumber
	&J_\gamma(\mathbf w)
	=\\
    &	-\frac{1}{N}\sum_{n=1}^{N}
	\exp\!\left[-(1-\gamma)\psi_{\rm marg}(\mathbf e_n(\mathbf w))-\gamma\psi_{\rm dep}(\mathbf u_n(\mathbf w))\right],
	\label{eq:CIC_objective}
\end{align} 
  with\begin{align}
	\psi_{\rm marg}(\mathbf e_n)=\frac{\|\mathbf e_n\|_2^2}{2\sigma_{\rm k}^2},
	\qquad
	\psi_{\rm dep}(\mathbf u_n)=(\rho_n+\delta)^{\alpha/2},
	\label{eq:marg_dep_penalties}
\end{align} 
where $\gamma\in[0,1]$ balances marginal correntropy and copula space dependence modeling.
The limiting case $\gamma=1$ gives the pure CIC objective, whereas $\gamma=0$ gives a multivariate Gaussian-kernel correntropy objective in the raw residual domain.
For notational compactness, the remainder of the paper writes $J(\mathbf w)\equiv J_\gamma(\mathbf w)$ unless a limiting case is explicitly discussed. The resulting objective   is bounded and generally nonconvex. Its robustness arises from the combination of marginal kernel weighting, the bounded copula transformation, and the dependence weighting induced by the radial penalty defined through $\Sigma$.

\subsection{Fundamental Properties}
\label{subsec:properties}

\begin{Proposition}
For all $\mathbf w$, the empirical CIC satisfies
 \begin{align}
	0 < \widehat V_{\alpha,\Sigma,\delta}^{\mathrm{CIC}}(\mathbf w) \le 1.
\end{align} 
\end{Proposition}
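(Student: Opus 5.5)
The plan is to bound each summand of \eqref{eq:CIC_empirical} separately and then average. Fix $\mathbf w$ and an index $n$.

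\emph{Sign of $\rho_n$.} Since $\Sigma\succ 0$, its inverse $\Sigma^{-1}$ is also symmetric positive definite. Hence
\[
\rho_n(\mathbf w)=(\mathbf u_n(\mathbf w)-\mathbf u_0)^\top\Sigma^{-1}(\mathbf u_n(\mathbf w)-\mathbf u_0)\ge 0 .
\]
This quantity is finite, because $\mathbf u_n(\mathbf w)$ and $\mathbf u_0$ both lie in the bounded set $(0,1)^p$. One could even bound it explicitly by $\rho_n\le p/\lambda_{\min}(\Sigma)$, although only finiteness is needed.

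\emph{Range of the exponent.} With $\delta\ge 0$ we get $\rho_n+\delta\in[0,\infty)$. For $\alpha\in(0,2]$ the map $t\mapsto t^{\alpha/2}$ sends $[0,\infty)$ into $[0,\infty)$, with the convention $0^{\alpha/2}=0$. Therefore $t_n\triangleq(\rho_n+\delta)^{\alpha/2}$ is a finite nonnegative real number.

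\emph{Range of each summand.} The function $t\mapsto e^{-t}$ is strictly positive on $\mathbb R$ and satisfies $e^{-t}\le 1$ for $t\ge 0$. It follows that $0<\exp[-(\rho_n(\mathbf w)+\delta)^{\alpha/2}]\le 1$ for every $n$.

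\emph{Averaging.} The estimate $\widehat V_{\alpha,\Sigma,\delta}^{\mathrm{CIC}}(\mathbf w)$ is the arithmetic mean of $N$ numbers in $(0,1]$, so it also lies in $(0,1]$. This proves the claim.

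There is no genuine obstacle here. The only point needing care is the strict lower bound: it relies on each $t_n$ being finite, which is exactly what the boundedness of the copula-space vectors guarantees. Equality in the upper bound holds if and only if every $t_n=0$, that is, $\delta=0$ and $\mathbf u_n=\mathbf u_0$ for all $n$. So for $\delta>0$ the upper bound is in fact strict.
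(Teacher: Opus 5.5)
Your proposal is correct and takes the same route as the paper, which simply notes that each summand $\exp[-(\rho_n+\delta)^{\alpha/2}]$ lies in $(0,1]$, so their average does too. You also make explicit the facts $\rho_n\ge 0$ (from $\Sigma^{-1}\succ 0$) and $\delta\ge 0$, which the paper leaves implicit; one small correction is that $\rho_n$ is finite for any real $\mathbf u_n$, so boundedness of $(0,1)^p$ is not needed for the strict lower bound of this finite average (it would matter only for a uniform positive lower bound, e.g.\ for the population CIC).
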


\begin{proof}
The exponential kernel in \eqref{eq:CIC_def} is strictly positive and upper bounded by one, which directly implies the stated result.
\end{proof}

\begin{remark}
The proposed criterion defines a correntropy criterion in the copula-transformed residual domain, not as an exact algebraic reduction of classical correntropy in the raw error domain.
In the scalar case $p=1$, CIC applies the kernel to $F(e)-u_0$ rather than to $e$.
Thus, it locally resembles generalized correntropy in the raw residual only when the residual distribution is symmetric around zero, $F(0)=u_0=1/2$, and the marginal CDF is locally linear, $F(e)-1/2\approx f(0)e$, in the small error region.
Similarly, when $\alpha=2$ and $\Sigma=\mathbf I$, CIC  becomes a Gaussian kernel in copula space; it coincides with Gaussian MCC in the raw residual domain only under the same local linear CDF approximation and an appropriate effective kernel width.
\end{remark}



\subsection{Bayesian Interpretation}
\label{subsec:bayesian}

\begin{remark}
The proposed objective admits a Bayesian  interpretation rather than, in its present form, a fully normalized Bayesian likelihood. The exponential term in \eqref{eq:CIC_objective} can be viewed as an unnormalized pseudo-likelihood that assigns low influence to samples with large marginal errors or large copula space deviations. A strict MAP derivation would require specifying normalized marginal densities and a normalized copula density, including the associated Jacobian terms. Accordingly, the proposed objective is best interpreted as a smoothed MAP analogue with bounded kernel influence, in the same spirit as maximum correntropy estimation \cite{chen2012max}, but adapted to a copula-transformed residual representation.
\end{remark}

\section{Conjugate-Gradient Optimization of the Copula-Induced Correntropy Objective}
\label{sec:opt}

This section develops a numerically stable optimization framework for minimizing the mixed CIC objective \eqref{eq:CIC_objective}. The proposed learning algorithm is based on nonlinear CG methods, which provide fast convergence, low memory complexity, and strong theoretical guarantees when combined with standard line-search conditions. A central technical challenge arises from the copula transformation, which is defined through marginal CDFs. To enable gradient-based optimization and backpropagation, the mapping $\mathbf w \mapsto \mathbf u_n(\mathbf w)$ must be differentiable. We therefore adopt smooth marginal CDF estimators or parametric marginals whose derivatives are valid PDFs. Under this setting, and with the marginal estimators and copula-space metric held fixed within each line-search block, the resulting objective admits a closed-form gradient, which enables the construction of a principled CG learning rule with convergence guarantees for the fixed-estimator subproblem.

\subsection{Differentiable Copula Transform via Smooth Marginal CDF Estimation}
\label{subsec:smoothcdf}

Let $\mathbf e_n(\mathbf w)=\mathbf d_n-f(\mathbf x_n;\mathbf w)\in\mathbb{R}^p$
denote the residual vector.
For each component $e_{n,i}$, we first define a smooth estimate of its marginal
probability density function (PDF) as
\begin{align}
	\widehat f_i(e)
	=
	\frac{1}{Nh_i}
	\sum_{m=1}^N
	K\!\left(\frac{e-e_{m,i}}{h_i}\right),
\end{align}
and the corresponding CDF is then formulated as
\begin{align}
	\widehat F_i(e)
	=
	\int_{-\infty}^{e} \widehat f_i(t)\,dt, \label{eq:kde_cdf_pdf}
\end{align}
where $K(\cdot)$ is a smooth kernel function, e.g., Gaussian, and $h_i>0$ is a bandwidth parameter. Kernel density estimation ensures that $\widehat F_i(\cdot)$ is continuously differentiable \cite{silv2018dens}. In the gradient formulas below, $\widehat F_i$ and $\widehat f_i$ are treated as fixed during each CG/line-search block; if they are differentiated through as functions of all residuals, additional cross-sample derivative terms arise. The copula-transformed residual is then given by
\begin{equation}
	\mathbf u_n
	=
	\big(\widehat F_1(e_{n,1}),\ldots,\widehat F_p(e_{n,p})\big)^\top
	\in(0,1)^p,
	\label{eq:u_def}
\end{equation}
so that its Jacobian with respect to $\mathbf e_n$ is diagonal, i.e., 
\begin{equation}
	\frac{\partial \mathbf u_n}{\partial \mathbf e_n}
	=
	\mathbf D_n
	=
	\mathrm{diag}
	\big(\widehat f_1(e_{n,1}),\ldots,\widehat f_p(e_{n,p})\big).
	\label{eq:Dn_def}
\end{equation}

For numerical stability, we apply a standard copula-space clipping
\begin{equation}
	u_{n,i} \leftarrow \min\{1-\epsilon,\max\{\epsilon,u_{n,i}\}\},
	\label{eq:clip}
\end{equation}
with a small $\epsilon>0$, e.g., $10^{-6}$, which prevents boundary effects and avoids ill-conditioned dependence estimates. Hard clipping is not differentiable at the clipping thresholds; therefore, the convergence analysis assumes that clipping is inactive on the level set or is replaced by a smooth clipping approximation. In implementation, hard clipping may be used with the clipped values treated as fixed for the purpose of each line search. In batch learning, $\{\widehat F_i,\widehat f_i\}$ can be updated every $R$ iterations using the current residuals. In streaming settings, $\widehat f_i$ can be updated recursively using exponential forgetting. If parametric marginals are preferred, e.g., Student's-$t$, one may set $\widehat F_i$ to the parametric CDF and $\widehat f_i$ to its PDF, yielding identical derivations.

\subsection{Dependence Metric in Copula Space}
\label{subsec:sigma_choice}

The matrix $\Sigma \succ 0$ defines the metric in copula space and determines how the criterion penalizes deviations along different dependence directions. In this work, $\Sigma$ is estimated from copula residuals via a shrinkage covariance estimator \cite{leo2004well,chen2010sh},
\begin{align}
	\widehat \Sigma
	=
	(1-\lambda)\,\widehat{\mathrm{Cov}}(\mathbf u_n)
	+
	\lambda\,\mathrm{tr}\!\left(\widehat{\mathrm{Cov}}(\mathbf u_n)\right)\frac{\mathbf I}{p}
	+ \varepsilon_\Sigma \mathbf I,
	\label{eq:shrinkage_sigma}
\end{align}
where $\lambda \in [0,1]$ is a shrinkage parameter and $\varepsilon_\Sigma>0$ is a small ridge constant. This ridge term guarantees strict positive definiteness even when the empirical covariance is rank deficient; without it, shrinkage alone guarantees positive definiteness only under additional conditions such as $\lambda>0$ and nonzero trace. Since the covariance is estimated from the 
copula-transformed residuals $\mathbf{u}_n \in (0,1)^p$, the marginal heavy-tailed effects are mitigated by the copula transformation, while the shrinkage estimator provides a regularized estimate of the dependence structure. In practice, $\widehat \Sigma$ may be updated every $R$ iterations together with the marginal estimators. For applications with known dependence structure, e.g., banded temporal correlations, $\Sigma$ may be constrained accordingly.


\subsection{Structure of the Objective Function}
\label{subsec:objective_structure}

For fixed marginal estimators and a fixed copula-space metric during a CG block, define
\begin{align}
&\mathbf s_n = \mathbf u_n-\mathbf u_0, \quad \rho_n = \mathbf s_n^\top\Sigma^{-1}\mathbf s_n,\label{eq:sn_rhon}\\
&\psi_{\rm marg,n} = \frac{\|\mathbf e_n\|_2^2}{2\sigma_{\rm k}^2}, \quad \psi_{\rm dep,n} = (\rho_n+\delta)^{\alpha/2},\label{eq:psi_defs}\\
	&\kappa_n = \exp\!\left[-(1-\gamma)\psi_{\rm marg,n}-\gamma\psi_{\rm dep,n}\right].\label{eq:kappa_n}
\end{align} 
Then, the optimized empirical objective is \begin{align}
	J(\mathbf w)=-\frac{1}{N}\sum_{n=1}^N \kappa_n.
	\label{eq:J_recall}
\end{align} 
The smoothing constant $\delta>0$ is only needed when $\alpha<2$ is used in the convergence analysis; it may be chosen very small in practice.

\subsection{Gradient Derivation}
\label{subsec:gradient}
Differentiating \eqref{eq:J_recall} with respect to the model parameters yields two contributions: a marginal correntropy term and a copula-space dependence term.
With $\mathbf J_n=\partial\mathbf y_n/\partial\mathbf w$ and $\partial\mathbf e_n/\partial\mathbf w=-\mathbf J_n$, the gradient is 
\begin{align}\notag
	&\nabla J(\mathbf w)
	=
	-\frac{1}{N}\\
    &\times\sum_{n=1}^{N}\kappa_n
	\left[
	\frac{1-\gamma}{\sigma_{\rm k}^2}\mathbf J_n^\top\mathbf e_n
	+
	\gamma\alpha(\rho_n+\delta)^{\frac{\alpha}{2}-1}
	\mathbf J_n^\top\mathbf D_n\Sigma^{-1}\mathbf s_n
	\right].
	\label{eq:grad_final}
\end{align} 
Equivalently, the copula space contribution can be written using the regularized weight \begin{align}
	\omega_n^{(\delta)}
	=
	\frac{\alpha}{2}(\rho_n+\delta)^{\frac{\alpha}{2}-1}\kappa_n,
	\label{eq:omega_def}
\end{align} 
which remains finite at $\rho_n=0$ whenever $\delta>0$.
When $\gamma=1$ and $\delta=0$, \eqref{eq:grad_final} reduces to the unregularized pure-CIC gradient; however, for $\alpha<2$ the associated scalar factor is singular at $\rho_n=0$, which is why the regularized form is used for the theoretical guarantees. Equation \eqref{eq:grad_final} explicitly shows how marginal residual magnitude, marginal density scaling through $\mathbf D_n$, copula space dependence through $\Sigma^{-1}\mathbf s_n$, and model sensitivity through $\mathbf J_n$  jointly determine the  update.

\subsection{Nonlinear Conjugate Gradient Learning}
\label{subsec:cg}

Here, we minimize $J(\mathbf w)$ using a nonlinear CG iteration
\begin{equation}
	\mathbf w_{k+1}
	=
	\mathbf w_k
	+
	\alpha_k \mathbf p_k,
	\qquad
	\mathbf p_k
	=
	-\mathbf g_k
	+
	\beta_k \mathbf p_{k-1},
	\label{eq:cg_update}
\end{equation}
where $\mathbf g_k=\nabla J(\mathbf w_k)$ and $\mathbf p_0=-\mathbf g_0$.
We adopt the PRP update with nonnegativity safeguard, i.e., PRP$^+$ \cite{gilbert1992glob}, as follows
\begin{equation}
	\beta_k^{\mathrm{PRP}^+}
	=
	\max\!\left\{
	0,\;
	\frac{\mathbf g_k^\top(\mathbf g_k-\mathbf g_{k-1})}{\|\mathbf g_{k-1}\|_2^2}
	\right\}.
	\label{eq:beta_prpplus}
\end{equation}

To preserve descent under nonconvexity and approximate gradients, we apply a restart whenever the sufficient descent condition is violated, i.e.,
\begin{align}
	\text{if }\ \mathbf g_k^\top \mathbf p_k \ge -\eta\|\mathbf g_k\|_2^2,\ \text{ then set }\ \mathbf p_k=-\mathbf g_k,
	\label{eq:restart}
	\end{align}
where  $\eta\in(0,1)$.
\subsection{Line Search and Step Size Selection}
\label{subsec:linesearch}
Step sizes $\alpha_k$ are selected using a strong Wolfe line search, ensuring sufficient decrease and curvature conditions \cite{no2006num}, i.e.,
\begin{align}
	J(\mathbf w_k+\alpha_k\mathbf p_k)
	&\le
	J(\mathbf w_k)
	+
	c_1\alpha_k\mathbf g_k^\top\mathbf p_k,
	\label{eq:wolfe1}
	\\
	\big|
	\nabla J(\mathbf w_k+\alpha_k\mathbf p_k)^\top\mathbf p_k
	\big|
	&\le
	c_2\big|\mathbf g_k^\top\mathbf p_k\big|,
	\label{eq:wolfe2}
\end{align}
where $0<c_1<c_2<1$.
These conditions are standard in nonlinear CG analysis \cite{no2006num}
and will be invoked directly in Section~\ref{sec:conv}.

\subsection{Algorithm Description}
\label{subsec:alg}

Algorithm~\ref{alg:citl_cg} summarizes the complete CIC-CG learning procedure for the objective \eqref{eq:CIC_objective}.
The method alternates between (i) transforming residuals to the copula space,
(ii) updating the copula-space dependence metric via shrinkage covariance estimation,
and (iii) performing a safeguarded nonlinear CG step with a strong Wolfe line search.
The periodic refresh of the marginal CDF/PDF estimators and $\widehat{\Sigma}$ improves numerical stability under heavy-tailed disturbances. The descent and convergence guarantees stated in Section~\ref{sec:conv} apply to each fixed estimator CG block; when the estimators are refreshed, the optimized objective is correspondingly updated.

\begin{algorithm}[t]
	\caption{CIC-CG: Copula-Induced Correntropy Conjugate Gradient Learning}
	\label{alg:citl_cg}
	\begin{algorithmic}[1]
		\STATE \textbf{Input:} $\{(\mathbf x_n,\mathbf d_n)\}_{n=1}^N$, $\alpha$, $\gamma$, $\sigma_{\rm k}$, $\delta$, $\lambda$, $\varepsilon_\Sigma$, $\mathbf u_0$, initial $\mathbf w_0$, update period $R$
		\STATE \textbf{Initialize:} estimate $\widehat F_i,\widehat f_i$ using \eqref{eq:kde_cdf_pdf}; compute $\mathbf u_n$ using \eqref{eq:u_def}; apply clipping or smooth clipping as in \eqref{eq:clip}
		\STATE Estimate $\Sigma$ using \eqref{eq:shrinkage_sigma}; freeze $\widehat F_i$, $\widehat f_i$, and $\Sigma$ for the current CG block
		\STATE Compute $\mathbf g_0=\nabla J(\mathbf w_0)$ via \eqref{eq:grad_final}; set $\mathbf p_0=-\mathbf g_0$
		\FOR{$k=0,1,2,\ldots$ until stopping criterion}
		\IF{$k>0$ and $\mathrm{mod}(k,R)=0$} \STATE Update $\widehat F_i,\widehat f_i$ and recompute $\mathbf u_n$; update $\Sigma$ via \eqref{eq:shrinkage_sigma}; recompute $\mathbf g_k=\nabla J(\mathbf w_k)$ for the refreshed fixed objective; restart the block with $\mathbf p_k=-\mathbf g_k$. \ENDIF
		\STATE Compute $\alpha_k$ via strong Wolfe line search \eqref{eq:wolfe1}-\eqref{eq:wolfe2} for the current fixed objective \STATE Update $\mathbf w_{k+1}=\mathbf w_k+\alpha_k\mathbf p_k$
		\STATE Compute $\mathbf g_{k+1}=\nabla J(\mathbf w_{k+1})$ using \eqref{eq:grad_final}
		\STATE Compute $\beta_{k+1}$ by \eqref{eq:beta_prpplus} and set $\mathbf p_{k+1}=-\mathbf g_{k+1}+\beta_{k+1}\mathbf p_k$
		\IF{$\mathbf g_{k+1}^\top \mathbf p_{k+1} \ge -\eta\|\mathbf g_{k+1}\|_2^2$ or $\|\mathbf p_{k+1}\|_2>M_p\|\mathbf g_{k+1}\|_2$}
		\STATE Restart/safeguard: $\mathbf p_{k+1}=-\mathbf g_{k+1}$
		\ENDIF
		\ENDFOR
		\STATE \textbf{Output:} $\mathbf w_\star$
	\end{algorithmic}
\end{algorithm}

\subsection{Computational Complexity}
\label{subsec:complexity}

Now, let $q$ be the number of parameters and $p$ the output dimension. Once marginal CDFs are available, computing $\{\mathbf u_n\}$ requires $O(Np)$ operations. Gradient evaluation \eqref{eq:grad_final} is dominated by Jacobian-vector products $\mathbf J_n^\top(\cdot)$, which are obtained by backpropagation at cost comparable to MSE learning. The additional copula-space overhead is primarily the multiplication $\Sigma^{-1}\mathbf s_n$. If $\Sigma^{-1}$ is dense, this costs $O(p^2)$ per sample; if $\Sigma$ is diagonal, banded, or low-rank, this cost is reduced accordingly. Updating the KDE marginals and shrinkage covariance introduces an additional periodic cost controlled by $R$, which is not incurred at every CG step when the estimators are frozen within each block.

\section{Convergence Analysis}
\label{sec:conv}
This section establishes convergence properties for the regularized mixed objective \eqref{eq:CIC_objective} optimized by the CIC-CG method.
Because the empirical marginal CDF/PDF estimates and the shrinkage covariance may be refreshed during training, the objective is generally nonstationary across refreshes.
Accordingly, the theorem below applies to a fixed estimator CG block, i.e., with $\widehat F_i$, $\widehat f_i$, and $\Sigma$ held fixed during the line search and CG updates.
If these quantities are updated periodically, the result applies within each block; convergence of the full nonstationary sequence would require additional assumptions on the magnitude and frequency of the estimator updates.

\subsection{Assumptions and Preliminaries}
\label{subsec:assumptions}

\begin{assumption}[Bounded Level Set]\label{assum:1}
Let
\begin{align}
\mathcal L = \{\mathbf w \in \mathbb R^q : J(\mathbf w) \le J(\mathbf w_0)\}
\end{align}
denote the level set associated with the initial iterate $\mathbf w_0$ for the fixed estimator objective.
It is assumed that $\mathcal L$ is bounded.
 \end{assumption}

 \begin{assumption}[Smooth Model Mapping]\label{assum:2}
The nonlinear mapping $f(\mathbf x;\mathbf w)$ is continuously differentiable with respect to $\mathbf w$ on $\mathcal L$.
Moreover, its Jacobian $\mathbf J_n(\mathbf w)$ is bounded and  Lipschitz continuous on $\mathcal L$; that is, there exists a constant $L_f>0$ such that
  \begin{align}
\|\mathbf J_n(\mathbf w_1)-\mathbf J_n(\mathbf w_2)\|
\le
L_f\|\mathbf w_1-\mathbf w_2\|,
\quad
\forall\,\mathbf w_1,\mathbf w_2\in\mathcal L.
\end{align} 
 \end{assumption}

\begin{assumption}[Fixed Smooth Marginal Transform]\label{assum:3}
  Within the CG block under analysis, the marginal estimators $\widehat F_i(\cdot)$ and $\widehat f_i(\cdot)$ are fixed.
The densities $\widehat f_i(\cdot)$  are bounded and Lipschitz continuous on $\mathbb R$, i.e., there exists  $M_f>0$ such that \begin{align}
0 \le \widehat f_i(e) \le M_f,
\qquad
\forall\, e\in\mathbb R,
\quad \forall i.
\end{align} 
 If clipping is used, it is either inactive on $\mathcal L$ or replaced by a smooth clipping map.
\end{assumption}

\begin{assumption}[Positive-Definite Dependence Metric]\label{assum:4}
  Within the CG block under analysis, $\Sigma$ is fixed, symmetric positive definite,   and satisfies \begin{align}
0 < \lambda_{\min}(\Sigma) \le \lambda_{\max}(\Sigma) < \infty.
\end{align}
  \end{assumption}

\begin{assumption}[Regularized Radial Penalty]\label{assum:5}
Either $\alpha=2$ with $\delta\ge0$, or $0<\alpha<2$ with $\delta>0$.
This condition ensures that $(\rho_n+\delta)^{\alpha/2}$ has a bounded derivative with respect to $\rho_n$ on $\rho_n\ge0$.
\end{assumption}

\begin{assumption}[Bounded Search Directions]\label{assum:6} The generated search directions satisfy $\|\mathbf p_k\|_2\le M_p\|\mathbf g_k\|_2$ for some constant $M_p>0$ whenever $\mathbf g_k\ne0$. This bounded angle condition is imposed in Zoutendijk convergence results. In Algorithm~\ref{alg:citl_cg}, it is enforced explicitly by restarting whenever $\|\mathbf p_k\|_2>M_p\|\mathbf g_k\|_2$; the descent restart alone would not guarantee this uniform bound.
\end{assumption}

\subsection{Lipschitz Continuity of the Gradient}
\label{subsec:lipschitz}

\begin{lemma}[Bounded Regularized Kernel Weight]\label{lemma:1}
  Under Assumption~\ref{assum:5}, the scalar factor
  \begin{align}
\omega_n^{(\delta)}
=
\frac{\alpha}{2}(\rho_n+\delta)^{\frac{\alpha}{2}-1}\kappa_n
\end{align} 
is bounded for all $\rho_n\ge0$. \end{lemma}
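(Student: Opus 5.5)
The plan is to bound the two factors of $\omega_n^{(\delta)}$ separately, using only that $\kappa_n\in(0,1]$ and the case split in Assumption~\ref{assum:5}.

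\emph{Bounding $\kappa_n$.} By \eqref{eq:kappa_n}, $\kappa_n=\exp[-(1-\gamma)\psi_{\rm marg,n}-\gamma\psi_{\rm dep,n}]$. Here $\gamma\in[0,1]$, and both $\psi_{\rm marg,n}=\|\mathbf e_n\|_2^2/(2\sigma_{\rm k}^2)$ and $\psi_{\rm dep,n}=(\rho_n+\delta)^{\alpha/2}$ are nonnegative. The exponent is therefore nonpositive, which gives $0<\kappa_n\le 1$ for every $\rho_n\ge0$ and every residual, exactly as in the Proposition on $\widehat V^{\mathrm{CIC}}$. This reduces the problem to bounding the scalar function $h(\rho)=\frac{\alpha}{2}(\rho+\delta)^{\frac{\alpha}{2}-1}$ on $\rho\ge0$.

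\emph{Case $\alpha=2$, $\delta\ge0$.} The exponent $\frac{\alpha}{2}-1$ vanishes, so $h\equiv1$ and $\omega_n^{(\delta)}=\kappa_n\le1$. We use the convention $0^0=1$ at $\rho_n=\delta=0$, which is consistent with the gradient in \eqref{eq:grad_final}.

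\emph{Case $0<\alpha<2$, $\delta>0$.} The exponent $\frac{\alpha}{2}-1$ is negative, so $h$ is decreasing in $\rho$ and attains its maximum at $\rho=0$. This gives
\[
0<\omega_n^{(\delta)}\le \frac{\alpha}{2}\,\delta^{\frac{\alpha}{2}-1}\le \delta^{\frac{\alpha}{2}-1}<\infty .
\]
Combining the two cases yields the uniform bound $\omega_n^{(\delta)}\le \max\{1,\delta^{\alpha/2-1}\}$, which is independent of $n$, $\mathbf w$, and $\rho_n$.

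\emph{Main obstacle.} The only delicate point is the second case at $\rho_n=0$. Without $\delta>0$ the factor $\rho^{\alpha/2-1}$ blows up there, and $\kappa_n\le1$ cannot compensate because $\kappa_n\to e^{-(1-\gamma)\psi_{\rm marg,n}}>0$. This is precisely why Assumption~\ref{assum:5} is needed, and why the bound must be taken at the left endpoint.

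Optionally, one can sharpen the bound using the decay of $\kappa_n$: $\kappa_n\le e^{-\gamma(\rho_n+\delta)^{\alpha/2}}$ decays faster than any power grows. Even without this, the plain bound suffices, and for $\gamma=0$ the same bound holds trivially.
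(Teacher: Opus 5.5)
Your proof is correct and follows the paper's own argument. Like the paper, you use $0<\kappa_n\le1$ to reduce to the radial factor $(\rho_n+\delta)^{\alpha/2-1}$, then split on Assumption~\ref{assum:5}: the factor equals one when $\alpha=2$, and is bounded by $\delta^{\alpha/2-1}$ when $0<\alpha<2$ and $\delta>0$. Your extra details are accurate refinements of that same route: deriving $\kappa_n\le1$ from the nonpositive exponent, the explicit uniform bound $\max\{1,\delta^{\alpha/2-1}\}$, and the observation that $\kappa_n$ cannot compensate when $\delta=0$.
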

\begin{proof}
 Since $0<\kappa_n\le1$, it is sufficient to bound $(\rho_n+\delta)^{\alpha/2-1}$.
If $\alpha=2$, this factor equals one.
If $0<\alpha<2$ and $\delta>0$, then
\begin{align}
(\rho_n+\delta)^{\alpha/2-1}
\le
\delta^{\alpha/2-1},
\end{align}
because the exponent $\alpha/2-1$ is negative. Thus $\omega_n^{(\delta)}$ admits a finite upper bound. Without the regularization, i.e., $\delta=0$, the factor is unbounded as $\rho_n\to0^+$ for every $\alpha<2$. \end{proof}

\begin{lemma}[Lipschitz Continuity of the Gradient]\label{lemma:2}
Under Assumptions \ref{assum:1}-  \ref{assum:5}, the gradient $\nabla J(\mathbf w)$ of the fixed-estimator objective is Lipschitz continuous on $\mathcal L$.
That is, there exists  $L_J>0$ such that
 \begin{align}
\|\nabla J(\mathbf w_1)-\nabla J(\mathbf w_2)\|
\le
L_J\|\mathbf w_1-\mathbf w_2\|,
\quad
\forall\,\mathbf w_1,\mathbf w_2\in\mathcal L.
\end{align} 
\end{lemma}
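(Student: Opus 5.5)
The plan is to write the gradient \eqref{eq:grad_final} as an average of $N$ per-sample terms, $\nabla J(\mathbf w)=-\frac1N\sum_n \mathbf J_n(\mathbf w)^\top \mathbf v_n(\mathbf w)$. Here
\begin{align}
\mathbf v_n(\mathbf w)=\kappa_n\Big[\tfrac{1-\gamma}{\sigma_{\rm k}^2}\mathbf e_n+2\gamma\,\tfrac{\omega_n^{(\delta)}}{\kappa_n}\mathbf D_n\Sigma^{-1}\mathbf s_n\Big].
\end{align}
It then suffices to show that each factor $\mathbf J_n$ and $\mathbf v_n$ is bounded and Lipschitz on $\mathcal L$. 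The product rule $\|A_1b_1-A_2b_2\|\le\|A_1-A_2\|\|b_1\|+\|A_2\|\|b_1-b_2\|$ then gives Lipschitz continuity of each summand, and averaging preserves the constant. Assumption~\ref{assum:2} already gives boundedness and Lipschitz continuity of $\mathbf J_n$ with constant $L_f$, so the work lies in $\mathbf v_n$.

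To control $\mathbf v_n$, I first record the elementary bounds on $\mathcal L$. Boundedness of $\mathcal L$ (Assumption~\ref{assum:1}) and boundedness of $\mathbf J_n$ make $\mathbf w\mapsto\mathbf e_n(\mathbf w)$ Lipschitz by the mean value theorem, with constant $\sup\|\mathbf J_n\|$. It is also bounded on $\mathcal L$, since $\mathbf e_n(\mathbf w)$ is controlled by $\|\mathbf e_n(\mathbf w_0)\|$ plus that constant times $\operatorname{diam}\mathcal L$. Strictly, I would apply this on the convex hull of $\mathcal L$, or assume the Jacobian bounds hold on a neighborhood; I will note this technical point.

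Next, $\mathbf u_n=\widehat F(\mathbf e_n)$ is Lipschitz in $\mathbf e_n$ with constant $M_f$, using Assumption~\ref{assum:3} with clipping inactive or smooth. It lies in $(0,1)^p$, so $\mathbf s_n$ is bounded by $\sqrt p$ and is Lipschitz. Assumption~\ref{assum:3} also makes $\mathbf D_n$ bounded by $M_f$ and Lipschitz, since each $\widehat f_i$ is Lipschitz. Assumption~\ref{assum:4} makes $\Sigma^{-1}$ a fixed bounded operator, with norm at most $1/\lambda_{\min}(\Sigma)$. Hence $\rho_n$ is bounded and Lipschitz, being a quadratic form in a bounded Lipschitz vector.

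The scalar factors are handled by composition with smooth functions on compact ranges. Write $\phi(\rho)=(\rho+\delta)^{\alpha/2}$. Under Assumption~\ref{assum:5}, both $\phi'$ and $\phi''$ are bounded on $\rho\ge0$. For $\alpha<2$, $|\phi''|\le\frac{\alpha}{2}(1-\frac{\alpha}{2})\delta^{\alpha/2-2}$; for $\alpha=2$, $\phi''=0$. Lemma~\ref{lemma:1} supplies the bound on $\phi'$. Consequently $\phi(\rho_n)$ and $\phi'(\rho_n)$ are Lipschitz in $\mathbf w$. The exponent $-(1-\gamma)\psi_{\rm marg,n}-\gamma\psi_{\rm dep,n}$ is therefore Lipschitz and ranges over a bounded set, and since $\exp$ is Lipschitz on bounded sets, $\kappa_n$ is Lipschitz with $0<\kappa_n\le1$. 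Multiplying the bounded Lipschitz factors $\kappa_n$, $\mathbf e_n$, $\phi'(\rho_n)$, $\mathbf D_n$, $\Sigma^{-1}$ and $\mathbf s_n$ shows that $\mathbf v_n$ is bounded and Lipschitz. Collecting constants yields an explicit $L_J$ depending on $L_f$, $\sup\|\mathbf J_n\|$, $M_f$, the Lipschitz constant of $\widehat f_i$, $\lambda_{\min}(\Sigma)$, $\delta$, $\alpha$, $\gamma$, $\sigma_{\rm k}$ and $\operatorname{diam}\mathcal L$.

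I expect the main obstacle to be the $\rho_n$-dependent weight when $\alpha<2$. Without $\delta>0$, $\phi'$ is neither bounded nor Lipschitz near $\rho_n=0$, so the argument needs Assumption~\ref{assum:5} for the bound on $\phi''$, not just the bound on $\phi'$ from Lemma~\ref{lemma:1}. A secondary subtlety is that $\mathcal L$ need not be convex. I will either derive the Lipschitz estimates on a convex bounded superset where the assumptions are taken to hold, or note that the strong Wolfe analysis only requires the bound along segments $\mathbf w_k+t\mathbf p_k$. In the latter case the estimate is applied to pairs of points joined by such segments.
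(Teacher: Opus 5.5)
Your proposal is correct and follows the same route as the paper. Both arguments write $\nabla J$ as a finite sum of products of bounded Lipschitz factors ($\kappa_n$, $\mathbf J_n$, $\mathbf e_n$, $\mathbf D_n$, $\Sigma^{-1}\mathbf s_n$, and the regularized radial factor) and then use closure of bounded Lipschitz functions under sums and products. Your version is more explicit than the paper's sketch in two places: you note that Lipschitz continuity of $(\rho_n+\delta)^{\alpha/2-1}$ needs the $\delta>0$ bound on $\phi''$, not only the bound on $\phi'$ from Lemma~\ref{lemma:1}, and you flag the convexity/neighborhood issue for $\mathbf e_n$ on a possibly nonconvex $\mathcal L$, which the paper leaves implicit.
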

 \begin{proof}
From \eqref{eq:grad_final},  $\nabla J(\mathbf w)$ is  a finite sum of products involving $\kappa_n$, $\mathbf J_n(\mathbf w)$, $\mathbf e_n(\mathbf w)$, $\mathbf D_n(\mathbf w)$,   $\Sigma^{-1}\mathbf s_n(\mathbf w)$, and the regularized radial factor $(\rho_n+\delta)^{\alpha/2-1}$.
On the bounded level set, Assumptions \ref{assum:2}-\ref{assum:4} imply boundedness and Lipschitz continuity of the model, marginal transform, and metric terms.
Lemma~\ref{lemma:1} and Assumption~\ref{assum:5} remove the singularity that would otherwise occur at $\rho_n=0$ for $\alpha<2$. Finite sums and products of bounded Lipschitz functions are Lipschitz continuous; hence the result follows.
\end{proof}

\subsection{Sufficient Descent Property}
\label{subsec:descent}

\begin{lemma}[Sufficient Descent]\label{lemma:3}
Let $\mathbf p_k$ be the search direction generated by the PRP$^+$ update with the restart rule in \eqref{eq:restart}.
Then
 \begin{align}
\mathbf g_k^\top \mathbf p_k
\le
-\eta \|\mathbf g_k\|_2^2,
\qquad
\forall k.
\end{align} 
\end{lemma}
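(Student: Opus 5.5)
The plan is to argue by direct case analysis on how $\mathbf p_k$ is finally produced in Algorithm~\ref{alg:citl_cg}. The point is that the restart rule \eqref{eq:restart} is a post-hoc test: whatever candidate direction the PRP$^+$ recursion yields, the algorithm keeps it only if it already satisfies the desired inequality. Otherwise it replaces the candidate by steepest descent. No property of $\beta_k^{\mathrm{PRP}^+}$, of the line search, or of the objective (such as Lemma~\ref{lemma:2}) is needed for this lemma. Those enter only later, in the Zoutendijk-type argument.

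The steps are as follows.
\begin{enumerate}
\item \emph{Initial direction.} For $k=0$ (and at every block refresh), $\mathbf p_k=-\mathbf g_k$. Hence $\mathbf g_k^\top\mathbf p_k=-\|\mathbf g_k\|_2^2\le-\eta\|\mathbf g_k\|_2^2$, because $\eta\in(0,1)$ and $\|\mathbf g_k\|_2^2\ge0$.
\item \emph{Candidate direction accepted.} For $k\ge1$, let $\tilde{\mathbf p}_k=-\mathbf g_k+\beta_k^{\mathrm{PRP}^+}\mathbf p_{k-1}$. If $\tilde{\mathbf p}_k$ is kept, the restart test in \eqref{eq:restart} was not triggered, so by definition $\mathbf g_k^\top\tilde{\mathbf p}_k<-\eta\|\mathbf g_k\|_2^2$. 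This gives the claim with strict inequality.
\item \emph{Candidate direction rejected.} If the test in \eqref{eq:restart} is triggered, then $\mathbf p_k=-\mathbf g_k$ and step 1 applies.
\item \emph{Bounded-direction safeguard.} The same conclusion holds if Assumption~\ref{assum:6} is enforced through the second restart criterion of Algorithm~\ref{alg:citl_cg}. That restart also sets $\mathbf p_k=-\mathbf g_k$, which again falls under step 1.
\end{enumerate}
Since every final $\mathbf p_k$ arises from exactly one of these branches, the inequality holds for all $k$.

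There is no real analytic obstacle. The step needing most care is stating precisely that the lemma concerns the direction \emph{after} safeguarding, not the raw PRP$^+$ direction. Without the restart, PRP$^+$ alone does not guarantee sufficient descent with a fixed $\eta$ unless further line-search conditions (e.g.\ Gilbert--Nocedal type) are imposed. I would add a brief remark on this. I would also note the degenerate case $\mathbf g_k=\mathbf 0$: there the inequality holds trivially as $0\le0$, and the algorithm has reached a stationary point.
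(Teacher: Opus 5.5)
Your proposal is correct and matches the paper's proof: a case split on whether the restart test in \eqref{eq:restart} fires. A restart gives $\mathbf g_k^\top\mathbf p_k=-\|\mathbf g_k\|_2^2\le-\eta\|\mathbf g_k\|_2^2$, and otherwise the negated test gives the strict inequality directly; your additional cases ($k=0$, block refreshes, and the bounded-direction safeguard) all yield steepest-descent directions and are covered by the paper's first branch.
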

\begin{proof}
  If the candidate PRP$^+$   direction violates the descent test in \eqref{eq:restart}, the algorithm sets $\mathbf p_k=-\mathbf g_k$, giving $\mathbf g_k^\top\mathbf p_k=-\|\mathbf g_k\|_2^2\le-\eta\|\mathbf g_k\|_2^2$ because $0<\eta<1$. If the restart is not triggered, the negation of the test gives $\mathbf g_k^\top\mathbf p_k< -\eta\|\mathbf g_k\|_2^2$ directly.
\end{proof}

 \subsection{Global Convergence}
\label{subsec:global}

\begin{theorem}[Global Stationarity for a Fixed Estimator Block]
Suppose that Assumptions \ref{assum:1}-\ref{assum:6} hold.
Let $\{\mathbf w_k\}$ be the sequence generated by Algorithm~\ref{alg:citl_cg} within a fixed estimator block, with step sizes satisfying the strong Wolfe conditions \eqref{eq:wolfe1}-\eqref{eq:wolfe2}.
Then, for that fixed objective,
\begin{align}
\lim_{k\rightarrow\infty} \|\nabla J(\mathbf w_k)\|_2 = 0
\end{align}
provided the fixed-estimator subproblem is optimized for infinitely many CG iterations. When the practical algorithm uses finite blocks of length $R$, the result should be interpreted as a block-level stationarity guarantee rather than convergence of the full periodically refreshed sequence. For each finite block, the proof yields descent and Zoutendijk summability over that block. 
\end{theorem}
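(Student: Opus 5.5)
The plan is the standard Zoutendijk argument, combined with the sufficient descent property and the bounded-direction safeguard.

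\textbf{Step 1: the iterates stay in $\mathcal L$ and $J$ is bounded below.} By Lemma~\ref{lemma:3}, $\mathbf g_k^\top\mathbf p_k\le-\eta\|\mathbf g_k\|_2^2<0$ whenever $\mathbf g_k\neq 0$. The first Wolfe condition \eqref{eq:wolfe1} then gives $J(\mathbf w_{k+1})\le J(\mathbf w_k)$, so every iterate remains in $\mathcal L$. Moreover, $J\ge -1$ because $0<\kappa_n\le 1$, so $\{J(\mathbf w_k)\}$ is monotone and bounded below. Hence it converges, and summing \eqref{eq:wolfe1} gives $\sum_k \alpha_k|\mathbf g_k^\top\mathbf p_k|<\infty$.

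\textbf{Step 2: Zoutendijk condition.} Combining \eqref{eq:wolfe2} with $c_2<1$ yields $(\mathbf g_{k+1}-\mathbf g_k)^\top\mathbf p_k\ge (c_2-1)\mathbf g_k^\top\mathbf p_k$. Lemma~\ref{lemma:2} bounds the left side by $L_J\alpha_k\|\mathbf p_k\|_2^2$. This gives the step-size lower bound
\[
\alpha_k\ge \frac{(1-c_2)|\mathbf g_k^\top\mathbf p_k|}{L_J\|\mathbf p_k\|_2^2}.
\]
Substituting into the summability from Step 1 gives
\[
\sum_k \frac{(\mathbf g_k^\top\mathbf p_k)^2}{\|\mathbf p_k\|_2^2}<\infty .
\]

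\textbf{Step 3: conclude stationarity.} Lemma~\ref{lemma:3} gives $(\mathbf g_k^\top\mathbf p_k)^2\ge\eta^2\|\mathbf g_k\|_2^4$, and Assumption~\ref{assum:6} gives $\|\mathbf p_k\|_2^2\le M_p^2\|\mathbf g_k\|_2^2$. Therefore each term of the Zoutendijk series is at least $(\eta^2/M_p^2)\|\mathbf g_k\|_2^2$. It follows that $\sum_k\|\mathbf g_k\|_2^2<\infty$, which forces $\|\nabla J(\mathbf w_k)\|_2\to0$. This is stronger than the $\liminf$ result usually obtained for PRP$^+$, and it is exactly what the theorem claims. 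If $\mathbf g_k=0$ at some finite $k$, the algorithm stops and the claim holds trivially.

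For a finite block of length $R$, the same inequalities hold with finite sums. This yields descent and the bound
\[
\sum_{k\in\text{block}}\|\mathbf g_k\|_2^2\le \frac{L_J M_p^2\,\bigl(J(\mathbf w_{\rm start})-J(\mathbf w_{\rm end})\bigr)}{c_1(1-c_2)\eta^2},
\]
which is the stated block-level guarantee.

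\textbf{Main obstacle.} The delicate point is Step 2. Lemma~\ref{lemma:2} gives Lipschitz continuity only on $\mathcal L$, but the segment between $\mathbf w_k$ and $\mathbf w_{k+1}$ need not lie in $\mathcal L$. My first approach is to work on a bounded open neighborhood $\mathcal N\supset\mathcal L$, such as a closed $r$-enlargement of $\mathcal L$. Boundedness of $\mathcal L$ lets the Lipschitz argument of Lemma~\ref{lemma:2} extend to $\mathcal N$, assuming the Jacobian assumptions hold there, which is the standard convention in Zoutendijk's theorem. The alternative is to apply the Lipschitz bound only at the two endpoints $\mathbf w_k,\mathbf w_{k+1}\in\mathcal L$. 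This suffices because the argument uses only $\|\mathbf g_{k+1}-\mathbf g_k\|_2\le L_J\alpha_k\|\mathbf p_k\|_2$, and both points are in $\mathcal L$ by Step 1.
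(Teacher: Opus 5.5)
Your proposal is correct and follows the paper's proof. Sufficient descent (Lemma~\ref{lemma:3}), the strong Wolfe conditions and the Lipschitz gradient (Lemma~\ref{lemma:2}) give the Zoutendijk condition; the bounded-direction safeguard (Assumption~\ref{assum:6}) then bounds each Zoutendijk term below by $(\eta^2/M_p^2)\|\mathbf g_k\|_2^2$, so $\sum_k\|\mathbf g_k\|_2^2<\infty$. The paper cites Zoutendijk from the literature, whereas you derive it directly and make the block-level constant explicit. Your endpoint-only use of Lemma~\ref{lemma:2} correctly handles the level-set subtlety that the paper passes over.
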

\begin{proof}
By Lemma~\ref{lemma:2}, the gradient is Lipschitz continuous on $\mathcal L$.
By Lemma~\ref{lemma:3}, the directions satisfy sufficient descent. Therefore, the strong Wolfe conditions imply the Zoutendijk condition \cite{no2006num,gilbert1992glob}
\begin{align}
\sum_{k=0}^{\infty}
\frac{(\mathbf g_k^\top \mathbf p_k)^2}{\|\mathbf p_k\|_2^2}
<
\infty.
\end{align} 
Using Lemma~\ref{lemma:3} and Assumption~\ref{assum:6},
\begin{align}
\frac{(\mathbf g_k^\top \mathbf p_k)^2}{\|\mathbf p_k\|_2^2}
\ge
\frac{\eta^2\|\mathbf g_k\|_2^4}{M_p^2\|\mathbf g_k\|_2^2}
=
\frac{\eta^2}{M_p^2}\|\mathbf g_k\|_2^2 .
\end{align}
Hence $\sum_{k=0}^{\infty}\|\mathbf g_k\|_2^2<\infty$, which implies $\|\mathbf g_k\|_2\to0$.
\end{proof}
The analysis above establishes stationarity only for a fixed smooth objective optimized within a block. It is not a proof of convergence for the entire periodically refreshed sequence, because updating $\widehat F_i$, $\widehat f_i$, or $\Sigma$ changes the objective. Proving convergence of the full nonstationary procedure would require additional assumptions controlling the size and frequency of these estimator updates.

\section{Numerical Experiments}
\label{sec:experiments}
This section evaluates the proposed CIC-CG algorithm under controlled synthetic multivariate regression  conditions. The experiments are designed to isolate and assess three key aspects: (i) robustness to heavy-tailed noise, (ii) the benefit of explicitly modeling dependence among multivariate errors, and (iii) convergence and tail-behavior relative to established robust learning criteria. All methods are implemented using identical model architectures, training protocols, and stopping criteria to ensure a fair and meaningful comparison. Unless otherwise stated, all reported results are averaged over multiple Monte Carlo trials.

\subsection{Experimental Setup}
\label{subsec:setup}
\subsubsection{Model Architecture}
We consider a nonlinear multivariate regression problem implemented using a multilayer perceptron (MLP) with a single hidden layer. The MLP consists of $d=3$ input nodes, $p=3$ output nodes, and a fixed hidden layer of $14$ neurons with hyperbolic tangent activation functions. This architecture is sufficiently expressive to capture nonlinear input-output relationships while remaining standard in signal processing and adaptive learning literature. All methods use the same network architecture and initialization strategy. Training is performed for a fixed number of iterations using full-batch gradients, so that performance differences are attributable solely to the learning criterion.

\subsubsection{Implementation of the CIC Objective}
The numerical experiments use the mixed objective in \eqref{eq:CIC_objective}, so the optimized training criterion is consistent with the derivations in Sections \ref{sec:citl}-\ref{sec:conv}.
The marginal term $\psi_{\rm marg}$ provides classical correntropy robustness in the raw residual domain, whereas $\psi_{\rm dep}$ penalizes deviations in copula-transformed residual space.
The parameter $\gamma$ controls the relative contribution of dependence modeling: $\gamma=0$ reduces the objective to a purely marginal correntropy criterion without explicit dependence modeling, whereas $\gamma>0$ progressively emphasizes the copula-space dependence term. Thus, the $\gamma=0$ variant does not include the copula-space dependence penalty; any difference from classical MCC must arise from the multivariate marginal aggregation and the common optimization protocol rather than from explicit dependence modeling.
Unless otherwise stated, $\gamma=0.55$ is used in all experiments. Table~\ref{tab:hyperparameters} summarizes the experimental parameters that should be fixed across Monte Carlo runs for reproducibility.

\begin{table}[t]
\caption{Experimental Hyperparameters and Simulation Settings}
\label{tab:hyperparameters}
\centering
\footnotesize
\begin{tabular}{ll}
\hline
Parameter & Value used in experiments \\ \hline
CIC shape parameter $\alpha$ & $1.2$ \\ 
Regularization $\delta$ & $10^{-12}$ \\ 
Mixing weight $\gamma$ & $0.55$ \\ 
Kernel width $\sigma_{\rm k}$ & $0.8$ \\ 
Noise scale $\sigma_{\varepsilon}$ & $0.35$ in the hard regime \\ 
Shrinkage parameter $\lambda$ & $0.35$ \\ 
Covariance ridge $\varepsilon_\Sigma$ & $10^{-8}$\\ 
Marginal transform & Student's-$t$ CDF, DoF $=1$, scale $=1$ \\ 
Estimator update period $R$ & $15$ \\ 
Restart threshold $\eta$ & $10^{-3}$ \\ 
Bounded-direction constant $M_p$ & $10$ \\ 
Number of iterations & $70$ \\ 
Monte Carlo runs & $50$ \\ 
\hline
\end{tabular}
\end{table}

\subsubsection{Synthetic Data Generation}
Input samples $\mathbf{x}_n \in \mathbb{R}^3$ are drawn independently from a uniform distribution on $[-1,1]^3$. The clean multivariate target signal $\mathbf{d}_n \in \mathbb{R}^3$ is generated according to a synthetic nonlinear benchmark mapping designed to induce nonlinear interactions among inputs \cite{nare1990iden}, i.e.,
\begin{align}
	\mathbf d_n =
	\begin{bmatrix}
		\sin(1.2\,x_{n,1}x_{n,2}) + 0.2\cos(2x_{n,3}) \\
		\cos(1.0\,x_{n,2}x_{n,3}) + 0.3\sin(1.5x_{n,1}) \\
		x_{n,1}^2 - x_{n,3} + 0.15\sin(2.2x_{n,2})
	\end{bmatrix}.
\end{align}
The training set consists of $N_{\mathrm{tr}}=600$ samples,
and the test set consists of $N_{\mathrm{te}}=600$ samples.

\subsubsection{Noise Model (Heavy-Tailed and Dependent Disturbances)}

To rigorously evaluate robustness and dependence modeling,
additive noise is injected into the training data using a multivariate Student's-$t$ distribution with copula-induced dependence. Specifically, the noise vector $\boldsymbol{\varepsilon}_n \in \mathbb{R}^p$ is generated as
$
	\boldsymbol{\varepsilon}_n = \sigma_{\varepsilon} \, \mathbf{T}_n
$,
where $\mathbf{T}_n$ follows a $p$-dimensional Student's-$t$ distribution with DoF $\nu$ and a correlation structure induced by a Student's-$t$ copula parameterized by dependence strength $\rho$. The dependence structure is specified through the constant correlation matrix $\mathbf R=(1-\rho)\mathbf I+\rho\,\mathbf 1\mathbf 1^\top$, with $0\le\rho<1$ in the simulations to preserve positive definiteness; numerical sweeps approaching $\rho=1$ use $\rho=0.99$ or an equivalent small jitter. Thus, all noise components share pairwise correlation coefficient $\rho$. This correlation matrix is used within the Student's-$t$ copula to generate dependent heavy-tailed disturbances. Additionally, by varying $\nu$, we control the heaviness of the marginal tails, while $\rho$ governs the strength of cross-component dependence and tail co-occurrence. This construction enables a systematic evaluation of both marginal robustness
and dependence-awareness. Unless otherwise stated, the ``hard" regime corresponds to $\nu=2.2$, $\sigma_{\varepsilon}=0.35$, and $\rho=0.85$, representing highly heavy-tailed and highly dependent noise. Importantly, all performance metrics are evaluated with respect to \emph{clean} test targets, ensuring that methods are assessed based on generalization
rather than fitting noisy labels.

\subsubsection{Benchmarked Methods}
The proposed CIC-CG method is compared against several established learning criteria. As a benchmark, we consider classical MSE learning, which is optimal under independent Gaussian noise assumptions. To evaluate marginal robustness, we include the Huber loss \cite{huber92rob} and the Student's-$t$ negative log-likelihood \cite{lang1989rob},  which model heavy-tailed marginals under independence assumptions.  We also consider the classical maximum correntropy criterion (MCC) with Gaussian kernel  \cite{liu2011kernel}, which provides bounded influence and robustness to impulsive noise. Finally, the proposed CIC-CG method extends correntropy by incorporating copula-based dependence modeling.
All objectives are optimized using the same full-batch PRP$^+$ nonlinear-CG framework and the same initialization and stopping conditions, unless explicitly stated otherwise. This makes the comparison primarily attributable to the objective function and its dependence-modeling terms rather than to optimizer-specific confounders.

\subsubsection{Evaluation Metrics}
Performance is evaluated using metrics that jointly assess overall prediction accuracy and robustness to extreme deviations.  Specifically, we report the root mean squared error (RMSE) computed on clean test data, which measures average generalization performance.  To characterize robustness under heavy-tailed disturbances, we also consider upper quantiles of the absolute error, namely Q90 and Q95, which emphasize tail behavior and rare but significant deviations.  In addition, convergence properties are analyzed by tracking the evolution of these test-set metrics as a function of training iteration.

These metrics jointly reveal how different learning criteria behave under heavy-tailed and dependent noise, particularly in the presence of rare but severe outliers. In addition, the proposed CIC-CG method is implemented using a generalized objective that balances marginal robustness and dependence-aware learning through the same convex mixing parameter $\gamma \in [0,1]$.

\subsubsection{Monte Carlo Protocol}
All reported results are averaged over $50$ independent Monte Carlo runs. In each run, the network parameters are randomly initialized and new noise realizations are generated for the training targets only; all reported test metrics are computed against clean test targets. Averaging across runs reduces variance and ensures statistical reliability of the observed trends. In the final presentation, mean curves should be accompanied by standard deviations, confidence intervals, or shaded error bands to quantify Monte Carlo variability; alternatively, a table of mean $\pm$ standard deviation for RMSE, Q90, and Q95 can be included. 
 \begin{figure}[!t]
	\centering
	\subfloat[]{
		\includegraphics[width=0.8\linewidth]{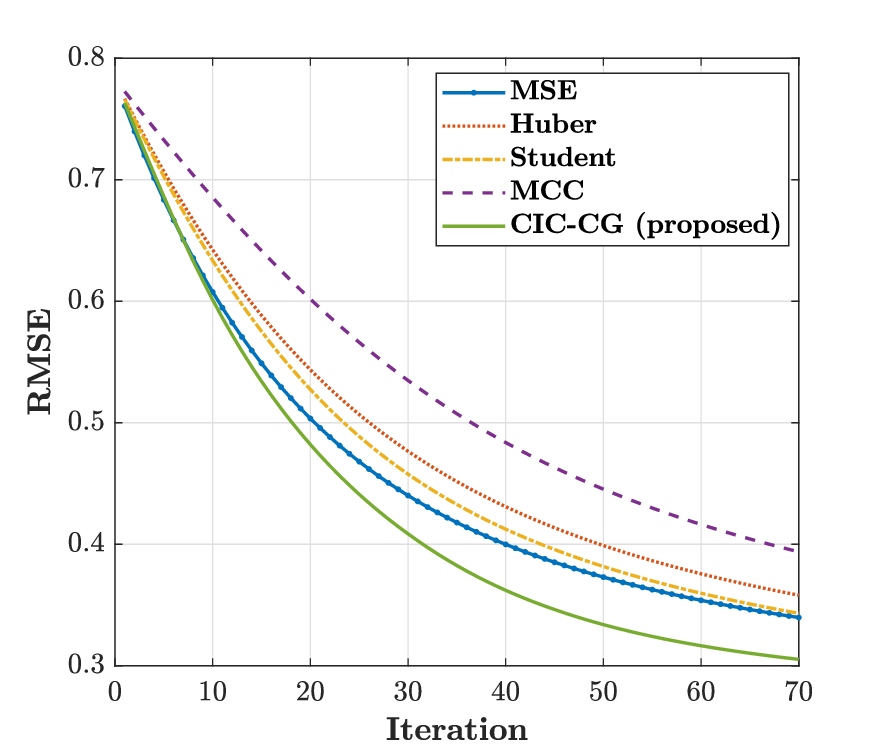}
		\label{fig:rmse-it}
	}\\
	\vspace{1mm}
	\subfloat[]{
		\includegraphics[width=0.8\linewidth]{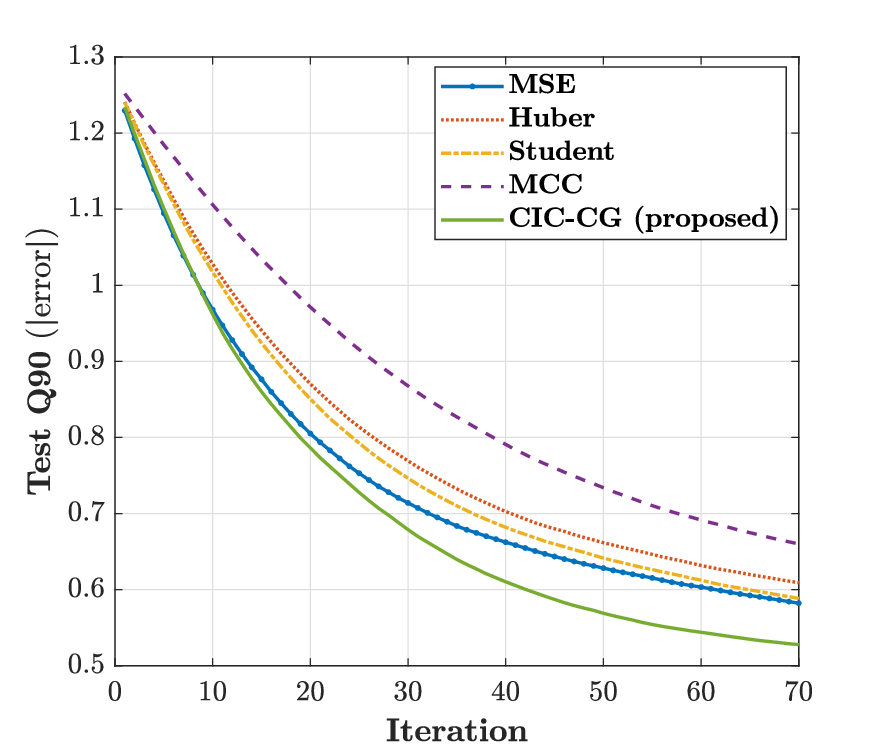}
		\label{fig:q90-it}
	}
	\caption{Convergence behavior under dependent heavy-tailed noise. (a) Test RMSE versus iteration; and (b) Test 90\% quantile of absolute error (Q90) versus iteration.
	}
	\label{fig:main1}
\end{figure}

\begin{figure}[!t]
	\centering
	\subfloat[]{
		\includegraphics[width=0.8\linewidth]{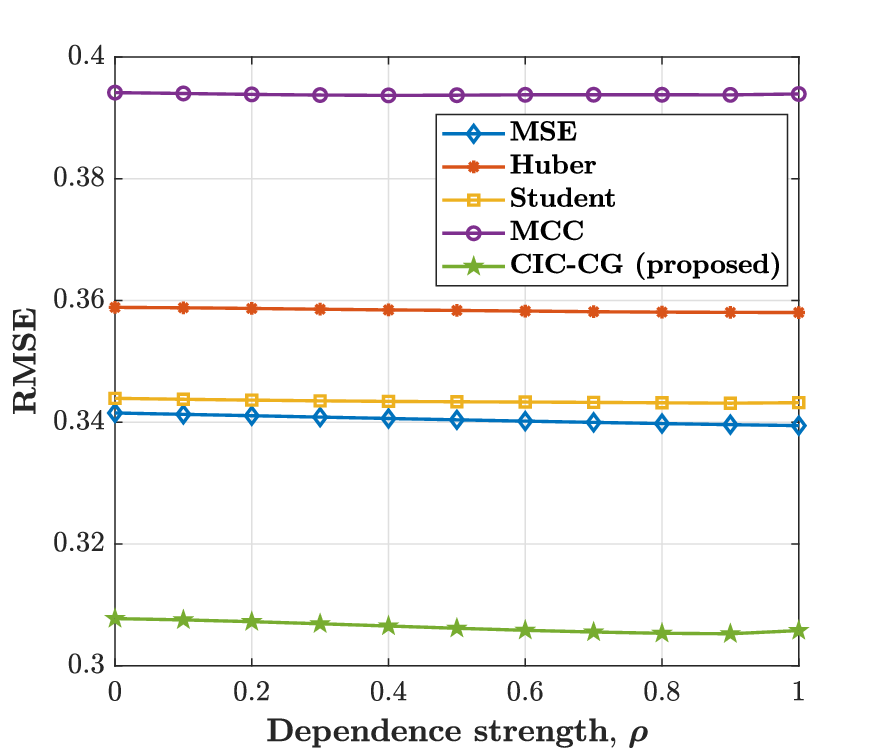}
		\label{fig:rmse-dep}
	}\\
	\vspace{1mm}
	\subfloat[]{
		\includegraphics[width=0.8\linewidth]{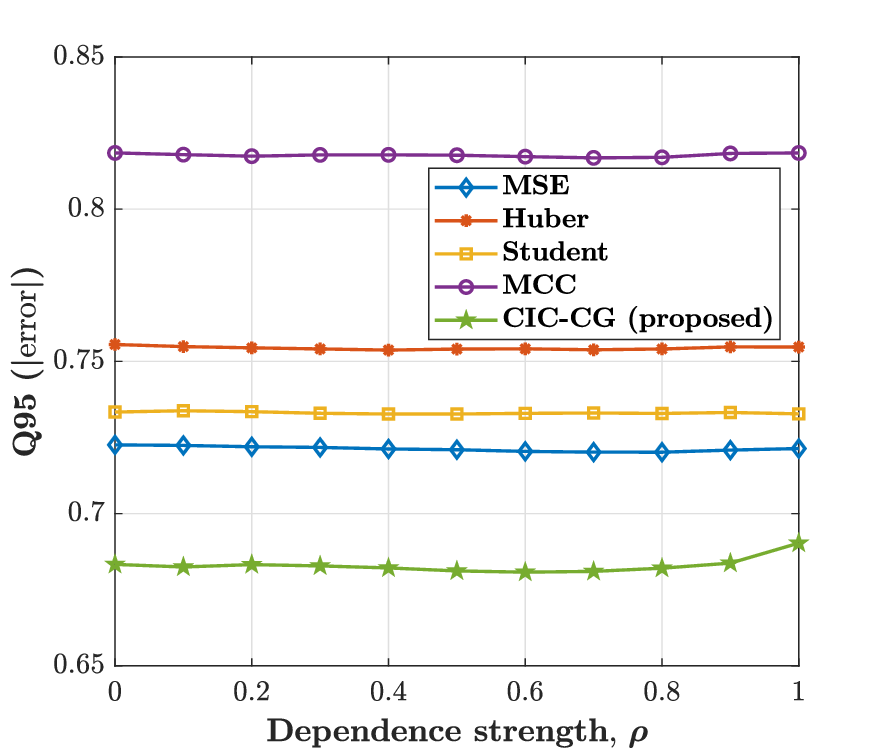}
		\label{fig:q95-dep}
	}
	\caption{Impact of dependence strength on performance under heavy-tailed noise. (a) Test RMSE versus dependence strength $\rho$; and (b) Test 95\% quantile of absolute error (Q95) versus dependence strength $\rho$.
	}
	\label{fig:main2}
\end{figure}
%

\begin{figure}[t]
	\centering
	\includegraphics[width=1\columnwidth]{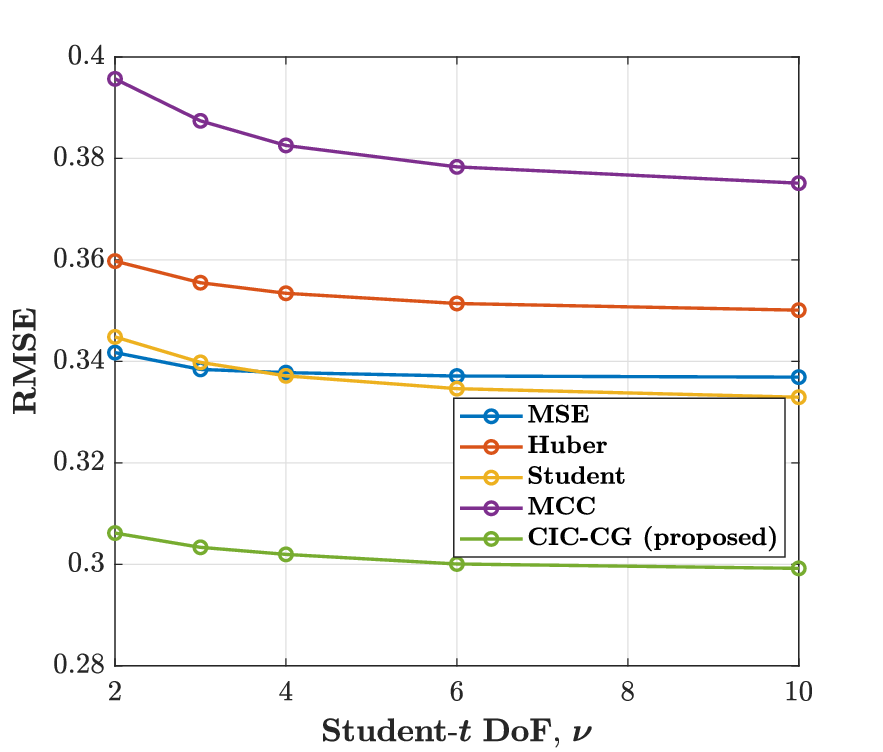}
	\caption{Robustness to marginal impulsiveness: test RMSE versus Student's-$t$ DoF $\nu$ under dependent heavy-tailed noise.}\label{fig:rmse-nu}
\end{figure}

\begin{figure}[t]
	\centering
	\includegraphics[width=1\columnwidth]{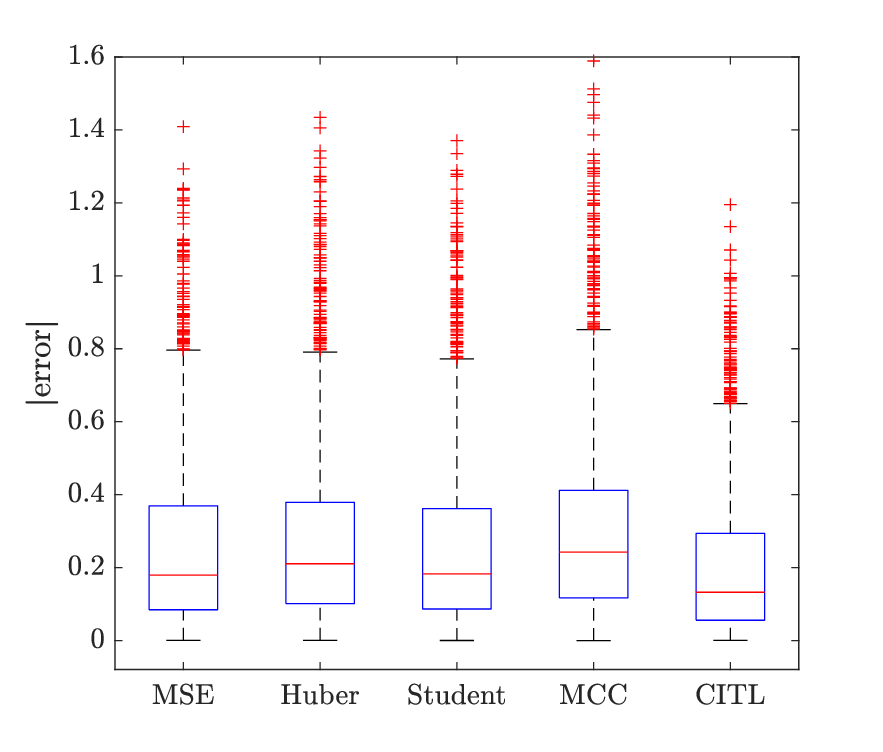}
	\caption{Distribution of absolute prediction errors $|d_{\mathrm{clean}}-\hat y|$ under a hard dependent heavy-tailed noise regime (clean evaluation). The label ``CITL" in the plotted legend denotes the proposed CIC-CG implementation. 
	}\label{fig:err-all}
\end{figure}

\begin{figure}[t]
	\centering
	\includegraphics[width=1\columnwidth]{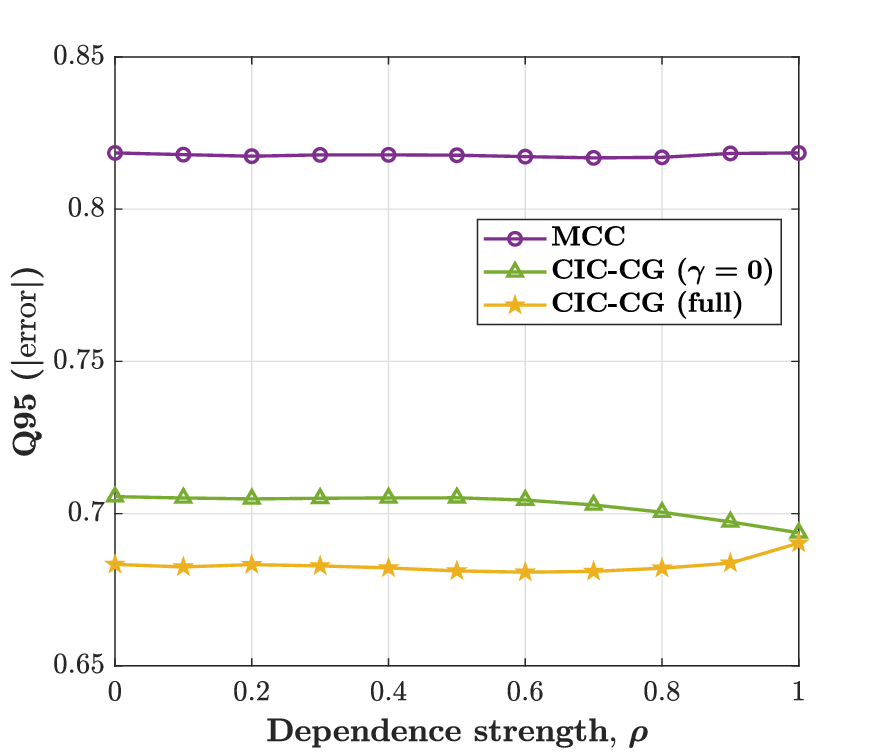}
	\caption{Ablation study on tail robustness under dependent heavy-tailed noise.
		Test $Q_{95}(|e|)$ is reported versus the dependence strength $\rho$, evaluated with respect to clean targets. Results compare classical MCC, CIC without dependence modeling ($\gamma=0$), and the full proposed CIC-CG with the joint marginal--dependence objective.
	}\label{fig:q95-rho}
\end{figure}

Fig.~\ref{fig:main1} illustrates the convergence behavior of different learning criteria under dependent heavy-tailed noise, evaluated with respect to the clean test targets. Fig.~\ref{fig:main1}(a) reports the evolution of the test RMSE, while Fig.~\ref{fig:main1}(b) shows the test 90\% quantile of the absolute error (Q90), which emphasizes tail robustness. It is evident that the proposed CIC-CG method consistently converges to the lowest error level among all compared approaches in both metrics, which indicates that incorporating dependence information via copula-induced correntropy leads to improved generalization accuracy as well as superior suppression of large deviations. Additionally, it can be observed that, while classical robust losses such as Huber and Student's-$t$ improve over MSE by mitigating the influence of marginal outliers, their convergence curves remain systematically above that of CIC-CG. 
This behavior is expected, as these methods treat error components independently and therefore do not explicitly weight joint deviations arising from dependent noise, including multivariate outliers that depart from the dominant dependence directions. Moreover, the figures reveal that the classical MCC exhibits slower convergence and higher steady-state error compared to CIC-CG. Although correntropy effectively limits the influence of large errors, its isotropic and componentwise formulation does not account for dependence structure, which becomes critical under copula-induced heavy-tailed disturbances.

Comparing Fig.~\ref{fig:main1}(a) and Fig.~\ref{fig:main1}(b), the relative performance trends are consistent.  Methods that achieve lower RMSE also exhibit improved tail behavior. 
However, the performance gap between CIC-CG and the baselines is slightly more pronounced in the Q90 metric, highlighting the particular strength of the proposed approach in controlling extreme errors. This confirms that the advantage of CIC-CG is not merely due to improved average fitting, but rather stems from its ability to jointly model marginal robustness and dependence-aware tail behavior.

Fig.~\ref{fig:main2} examines the effect of dependence strength on learning performance under heavy-tailed noise. Fig.~\ref{fig:main2}(a) reports the test RMSE as a function of the copula dependence parameter $\rho$, while Fig.~\ref{fig:main2}(b) shows the corresponding tail metric based on the 95\% quantile of the absolute error. A key observation is that the proposed CIC-CG method consistently achieves the lowest error across the entire range of dependence strengths in both metrics. In particular, the performance advantage of CIC-CG is preserved as $\rho$ increases,
indicating that the method remains stable even when strong dependence is present among error components. 
  In contrast, baseline methods that do not explicitly model dependence  exhibit largely flat performance as $\rho$ increases, with only minor  variations across the range of dependence strengths. This behavior is expected, as losses such as MSE, Huber, Student's-$t$, and classical MCC implicitly assume independent or weakly dependent errors and therefore do not explicitly adapt to changes in the joint dependence structure.

The gap between CIC-CG and the baselines is more pronounced in Fig.~\ref{fig:main2}(b), which emphasizes extreme errors. This highlights an important property of the proposed framework: by operating in copula-transformed space, CIC-CG is able to suppress joint deviations that are not explicitly weighted by marginally robust criteria.  As a result, increases in dependence strength do not lead to a noticeable increase in tail risk for CIC-CG, while the baseline methods remain largely insensitive to $\rho$ due to their reliance on marginal error statistics. It is also worth noting that the RMSE curves in Fig.~\ref{fig:main2}(a) remain relatively stable for all methods, suggesting that dependence primarily affects higher-order error statistics rather than average performance. This further motivates the use of tail-sensitive metrics when evaluating robust learning under dependent noise. However, the benchmark methods operate componentwise on residuals and depend primarily on marginal error statistics. Since increasing $\rho$ modifies the dependence structure without altering the marginal distributions, their tail performance remains relatively stable even as the dependence strength increases.

Fig. \ref{fig:rmse-nu} investigates the robustness of different learning criteria with respect to marginal impulsiveness, controlled by the DoF  $\nu$ of the Student's-$t$ noise. Smaller values of $\nu$ correspond to heavier tails and more frequent extreme outliers, \textcolor{black}{whereas larger $\nu$ leads the Student's-$t$ marginals to approach the Gaussian distribution (in the limit $\nu \to \infty$), thereby reducing marginal tail heaviness while the dependence structure remains governed by the underlying copula.} As expected, all methods benefit from increasing $\nu$, as the noise becomes less impulsive and easier to handle. However, the rate and extent of improvement differ substantially across methods. The MSE and Huber losses exhibit limited robustness for small $\nu$, reflecting their sensitivity to heavy-tailed disturbances. While Huber improves upon MSE by clipping large residuals, it remains fundamentally marginal and fails to adapt to strong outliers. The Student's-$t$ loss provides enhanced robustness by explicitly modeling heavy-tailed marginals, resulting in lower RMSE across all  $\nu$ values. Nevertheless, its performance saturates as  $\nu$ increases, indicating that marginal robustness alone is insufficient in the presence of structured noise. Classical MCC further improves robustness by suppressing large deviations through kernel weighting. However, MCC operates independently across output dimensions and implicitly assumes residual independence, which limits its effectiveness when noise components are statistically dependent.

In contrast, the proposed CIC-CG consistently achieves the lowest RMSE across the entire range of $\nu$. The performance gap is most pronounced in the strongly impulsive regime (small $\nu$), where tail events are both frequent and dependent. This behavior directly reflects the core advantage of the proposed framework: marginal robustness inherited from correntropy combined with explicit dependence modeling via the copula transformation. Importantly, as $\nu$ increases and the noise becomes less heavy-tailed, the advantage of CIC-CG gradually narrows but does not vanish. This indicates that the copula-induced dependence modeling does not harm performance in near-Gaussian conditions, while providing substantial gains in challenging heavy-tailed regimes.

Fig.~\ref{fig:err-all} presents the distribution of absolute prediction errors $|d_{\mathrm{clean}}-\hat y|$ under a challenging dependent heavy-tailed noise regime. All models are trained using noisy labels, while evaluation is performed against the clean test targets in order to assess true generalization rather than noise fitting. The boxplot representation highlights not only the central tendency of the errors but also their dispersion and tail behavior, which are critical in heavy-tailed environments. As observed, MSE and Huber losses exhibit wide interquartile ranges and long tails, indicating strong sensitivity to extreme outliers. The Student's-$t$ loss reduces tail magnitude by explicitly modeling heavy-tailed marginals, yet still produces a considerable number of large errors due to the lack of dependence modeling.  Classical MCC applies kernel-based weighting to reduce the influence of large errors.  However, in this dependent noise setting its componentwise formulation limits its  effectiveness, resulting in a wider dispersion of errors and a larger number of  extreme deviations compared to the proposed method. 
In contrast, the proposed CIC-CG method achieves the smallest median error, the narrowest interquartile range, and a substantially reduced number of extreme outliers. This behavior directly reflects the advantage of incorporating copula-induced dependence into the correntropy framework, enabling effective suppression of dependent tail events. Therefore, the proposed CIC-CG not only improves average accuracy but also significantly reduces tail risk, which is essential for robust signal processing
under dependent heavy-tailed disturbances.

Fig.~\ref{fig:q95-rho} investigates the role of the dependence term in the proposed copula-induced correntropy learning criterion through an ablation study. We compare three methods: classical MCC, CIC-CG with $\gamma=0$ (marginal correntropy only), and the full CIC-CG with joint marginal--dependence modeling. It can be observed that the MCC baseline exhibits nearly constant performance as $\rho$ increases, indicating that classical correntropy is insensitive to cross-component dependence and tail co-movements. While MCC is robust to marginal outliers, it lacks a mechanism to explicitly exploit or model cross-component dependence. As a result, its performance remains largely insensitive to changes in $\rho$, since the marginal error statistics remain unchanged. The $\gamma=0$ variant should be interpreted as the marginal component of the mixed objective, not as a copula-dependence model. Its difference from classical MCC is due to the multivariate marginal aggregation and the common PRP$^+$ optimization protocol used in this study.
The full CIC-CG achieves the lowest $Q_{95}(|e|)$ over a wide range of dependence strengths, demonstrating the benefit of adding the copula space dependence penalty to marginal correntropy. A mild increase in error at very high $\rho$ may reflect a bias--variance trade-off introduced by a fixed mixing weight $\gamma$, motivating adaptive or data-driven tuning of $\gamma$.

\vspace{-0.4cm}
\section{Conclusion}
\label{sec:conclusion}

This paper proposed a copula-induced correntropy learning framework for robust signal processing in the presence of dependent heavy-tailed noise. By embedding a copula-transformed residual representation into an information-theoretic similarity measure, the resulting CITL criterion separates marginal robustness from inter-component dependence-aware weighting, addressing a key limitation of   componentwise correntropy-based approaches. We developed a nonlinear conjugate-gradient learning algorithm for the mixed marginal--dependence objective used in the experiments. For fixed marginal estimators, a fixed copula-space metric, and a regularized radial penalty, we established sufficient descent and global stationarity for the fixed-estimator subproblem under standard Wolfe line-search and bounded-direction assumptions. Numerical experiments on synthetic multivariate regression tasks demonstrated that the proposed CIC-CG method improves robustness under dependent impulsive noise, particularly in terms of tail-sensitive error metrics. The ablation study further indicated that the copula-space dependence term provides gains beyond marginal robust losses alone, although the current covariance copula-space metric should be viewed as a tractable dependence weighting mechanism rather than a full copula density estimator; richer copula density models would be required to characterize nonlinear or asymmetric tail dependence completely.
\vspace{-0.3cm}
\bibliographystyle{IEEEtran}

\begin{thebibliography}{99}


	\bibitem{maronna2019robust}
R. A. Maronna, R. D. Martin, V. J. Yohai and M. Salibi\'an-Barrera, \emph{Robust Statistics: Theory and Methods (with R)}, {John Wiley \& Sons}, 2019.

	\bibitem{hay2002adaptive}
S. S. Haykin, \emph{Adaptive Filter Theory}, {Pearson Education India}, 2002.

\bibitem{sun2017major}
Y. Sun, P. Babu, and D. P. Palomar, ``Majorization-Minimization Algorithms in Signal Processing, Communications, and Machine Learning," \emph{IEEE Trans. Signal Process.}, vol. 65, no. 3, pp. 794-816, Feb. 2017.

	\bibitem{huber2011robust}
P. J. Huber, ``Robust statistics," \emph{ International encyclopedia of statistical science}, Springer, Berlin, Heidelberg, pp. 1248-1251, 2011.

\bibitem{jav2025time}
A. Javaheri, J. Ying, D. P. Palomar and F. Marvasti, ``Time-Varying Graph Learning for Data With Heavy-Tailed Distribution," \emph{IEEE Trans. Signal Process.}, vol. 73, pp. 3044-3060, 2025.

	\bibitem{clavier2021imp}
L. Clavier, G. W. Peters, F. Septier and I. Nevat, ``Impulsive Noise Modeling and Robust Receiver Design," \emph{EURASIP J. Wirel. Commun. Netw.}, vol. 13, no. 1, 2021.


\bibitem{nasri2009ad}
A. Nasri, A. Nezampour and R. Schober, ``Adaptive Lp-norm Diversity Combining in Non-Gaussian Noise and Interference," \emph{IEEE Trans. Wireless Commun.}, vol. 8, no. 8, pp. 4230-4240, Aug. 2009.

\bibitem{zhu2018huber}
B. Zhu, L. Chang, J. Xu, F. Zha and J. Li, ``Huber-based Adaptive Unscented Kalman Filter with Non-Gaussian Measurement Noise," \emph{Circuits Syst. Signal Process.}, vol. 37, no. 9, pp. 3842-3861, 2018.

\bibitem{tang2024gen}
H. Tang, H. Han, S. Zhang and W. Feng, ``A Generalized t-Distribution-Based Kernel Adaptive Filtering Algorithm," \emph{IEEE Trans. Circuits Syst. {II}: Express Briefs}, vol. 71, no. 6, pp. 3241-3245, June 2024.

\bibitem{huang2019novel}
Y. Huang, Y. Zhang, Y. Zhao and J. A. Chambers, ``A Novel Robust Gaussian-Student's t Mixture Distribution Based Kalman Filter," \emph{IEEE Trans. Signal Process.}, vol. 67, no. 13, pp. 3606-3620, July  2019.

	\bibitem{gr2022an}
J. Größer and O. Okhrin, ``Copulae: An Overview and Recent Developments," \emph{Wiley Interdiscip. Rev. Comput. Stat.}, vol. 14, no. 3, 2022.


	\bibitem{prin2010info}
J. C. Principe, \emph{Information theoretic learning: Renyi's entropy and kernel perspectives}, {Springer Science \& Business Media}, 2010.

\bibitem{liu2007corr}
W. Liu, P. P. Pokharel and J. C. Principe, ``Correntropy: Properties and Applications in Non-Gaussian Signal Processing," \emph{IEEE Trans. Signal Process.}, vol. 55, no. 11, pp. 5286-5298, 2007.

\bibitem{heravi2018new}
A. R. Heravi and G. Abed Hodtani, ``A New Correntropy-Based Conjugate Gradient Backpropagation Algorithm for Improving Training in Neural Networks," \emph{IEEE Trans. Neural Netw. Learn. Syst.}, vol. 29, no. 12, pp. 6252-6263, Dec. 2018.

\bibitem{chen2017rob}
B. Chen, L. Xing, H. Zhao, B. Xu and J. C. Principe, ``Robustness of maximum correntropy estimation against large outliers," \emph{arXiv preprint}, \url{https://arxiv.org/abs/1703.08065v2}, 2017. 

\bibitem{chen2016gen}
B. Chen, L. Xing, H. Zhao, N. Zheng and J. C. Pr\'incipe, ``Generalized Correntropy for Robust Adaptive Filtering," \emph{IEEE Trans. Signal Process.}, vol. 64, no. 13, pp. 3376-3387, July, 2016.

\bibitem{nelson2006an}
R. B. Nelsen, \emph{An Introduction to Copulas}, {New York, NY: Springer New York}, Jan. 2006. 


\bibitem{zeng2014copula}
X. Zeng, J. Ren, Z. Wang, S. Marshall and T. Durrani, ``Copulas for Statistical Signal Processing (Part I): Extensions and Generalization," \emph{Signal Process.}, vol. 94, pp. 691-702, 2014.

\bibitem{peter2014comm}
G. W. Peters, T. A. Myrvoll, T. Matsui, I. Nevat and F. Septier, ``Communications meets copula modeling: Non-standard dependence features in wireless fading channels," \emph{2014 IEEE Global Conference on Signal and Information Processing (GlobalSIP)}, Atlanta, GA, USA, 2014, pp. 1224-1228. 

\bibitem{ghadi2021copula}
F. Rostami Ghadi and G. A. Hodtani, ``Copula-Based Analysis of Physical Layer Security Performances Over Correlated Rayleigh Fading Channels," \emph{IEEE Transactions on Information Forensics and Security}, vol. 16, pp. 431-440, 2021.

\bibitem{zheng2019copula}
C. Zheng, M. Egan, L. Clavier, G. W. Peters and J. -M. Gorce, ``Copula-Based Interference Models for IoT Wireless Networks," \emph{2019 IEEE International Conference on Communications (ICC)}, Shanghai, China, 2019, pp. 1-6.

\bibitem{jor2021copula}
E. A. Jorswieck and K. -L. Besser, ``Copula-Based Bounds for Multi-User Communications–Part I: Average Performance," \emph{IEEE Communications Letters}, vol. 25, no. 1, pp. 3-7, Jan. 2021.


\bibitem{chen2012max}
B. Chen and J. C. Principe, ``Maximum Correntropy Estimation Is a Smoothed MAP Estimation," \emph{IEEE Signal Process. Lett.}, vol. 19, no. 8, pp. 491-494, Aug. 2012.

\bibitem{no2006num}
J. Nocedal and S. J. Wright, \emph{Numerical optimization}, {New York, NY: Springer New York}, Jul. 2006.

\bibitem{chat2010fle}
A. Chatterjee, ``A Fletcher-Reeves Conjugate Gradient Neural-Network-Based Localization Algorithm for Wireless Sensor Networks," \emph{IEEE Trans. Veh. Technol.}, vol. 59, no. 2, pp. 823-830, Feb. 2010.

\bibitem{zhang2006des}
L. Zhang, W. Zhou and D. H. Li, ``A Descent Modified Polak-Ribi\'ere-Polyak Conjugate Gradient Method and Its Global Convergence," \emph{IMA J. Numer. Anal.}, vol. 24, no. 6, pp. 629-640, 2006.

\bibitem{noc2005new}
W. W. Hager and H. Zhang, ``A New Conjugate Gradient Method with Guaranteed Descent and an Efficient Line Search," \emph{SIAM J. Optim.}, vol. 16, no. 1, pp. 170-192, 2005.

\bibitem{silv2018dens}
B. W. Silverman, \emph{Density estimation for statistics and data analysis}, {Routledge}, 2018.

\bibitem{leo2004well}
O. Ledoit and M. Wolf, ``A Well-conditioned Estimator for Large-Dimensional Covariance Matrices," \emph{J. Multivar. Anal.}, vol. 88, no. 2, pp. 365-411, 2004.

\bibitem{chen2010sh}
Y. Chen, A. Wiesel, Y. C. Eldar and A. O. Hero, ``Shrinkage Algorithms for MMSE Covariance Estimation," \emph{IEEE Trans. Signal Process.}, vol. 58, no. 10, pp. 5016-5029, Oct. 2010.

\bibitem{gilbert1992glob}
J. C. Gilbert and J. Nocedal, ``Global Convergence Properties of Conjugate Gradient Methods for Optimization," \emph{SIAM J. Optim.}, vol. 2, no. 1, pp. 21-42, 1992.

\bibitem{li2023new}
Y. Li, C. Li, W. Yang and W. Zhang, ``A New Conjugate Gradient Method with A Restart Direction and Its Application in Image Restoration," \emph{AIMS Math}, vol. 8, no. 12, pp. 28791-28807, 2023.
\bibitem{huber92rob}
P. J. Huber, ``Robust estimation of a location parameter," \emph{Breakthroughs in statistics: Methodology and distribution}, pp. 492-518, New York, NY: Springer New York, 1992.


\bibitem{lang1989rob}
K. L. Lange, R. J. A. Little and J. M. G. Taylor, ``Robust Statistical Modeling Using the $t$ Distribution,"
 \emph{J. Am. Stat. Assoc.}, vol. 84, no. 408, pp. 881-896, 1989.

 
 \bibitem{liu2011kernel}
 W. Liu, J. C. Principe and S. Haykin, \emph{Kernel Adaptive Filtering: A Comprehensive Introduction}, John Wiley \& Sons, 2011.

 
\bibitem{nare1990iden}
K. S. Narendra and K. Parthasarathy, ``Identification and control of dynamical systems using neural networks," \emph{IEEE Trans. Neural Netw.}, vol. 1, no. 1, pp. 4-27, March 1990.
	\end{thebibliography}

\end{document}